\providecommand{\subjectto}{\ensuremath{\text{subject to}}}
\DeclareMathOperator{\cut}{cut}
\DeclareMathOperator{\vol}{vol}
\DeclareMathOperator{\den}{density}
\newcommand{\alg}[1]{\textsc{#1}}
\newcommand{\fiveLP}{\alg{fiveLP}\xspace}
\newcommand{\lcc}{\alg{LambdaCC}\xspace}
\newcommand{\dorm}{\ensuremath{\mathit{dorm}}}
\newcommand{\yr}{\ensuremath{\mathit{year}}}
\newcommand{\fs}{\ensuremath{\mathit{s/f}}}
\newcommand{\lccb}{\lcc}
\newcommand{\threeLP}{\alg{threeLP} }
\newcommand{\twoCD}{\alg{twoCD} }
\newcommand{\nij}{\ensuremath{(i,j) \in E^-}}
\newcommand{\tony}[1]{}
\renewcommand{\tony}[1]{{\textcolor{blue}{[{#1}\ --\ AIW]}}}
\definecolor{newcolor}{rgb}{.5, .0, .5}
\newcommand{\newln}[1]{}
\renewcommand{\newln}[1]{{{{#1}}}}
\begin{document}
\title{Unifying Sparsest Cut, Cluster Deletion, and Modularity Clustering Objectives with Correlation Clustering\footnote{A shorter version of this work was presented at the 2018 Web Conference~\cite{veldt2017lamcc}}}

\author[a]{Nate Veldt}
\author[b]{David F. Gleich} 
\author[c]{Anthony Wirth}

\affil[a]{Purdue University Mathematics Department}
\affil[b]{Purdue University Computer Science Department}
\affil[c]{The University of Melbourne, Computing and Information Systems School}

\maketitle

\begin{abstract}
Graph clustering, or community detection, is the task of identifying groups of closely related objects in a large network. In this paper we introduce
a new community-detection framework called \lcc that is based on a specially weighted version of correlation clustering. A key component in our
methodology is a clustering resolution parameter,~$\lambda$, which implicitly controls the size and structure of clusters formed by our framework. We
show that, by increasing this parameter, our objective effectively interpolates between two different strategies in graph clustering: finding a sparse cut and
forming dense subgraphs. Our methodology unifies and generalizes a number of other important clustering quality functions including modularity,
sparsest cut, and cluster deletion, and places them all within the context of an optimization problem that has been well studied from the perspective
of approximation algorithms. Our approach is particularly relevant in the regime of finding dense clusters, as it leads to a 2-approximation for the cluster deletion problem. We use our approach to cluster several graphs, including large collaboration networks and social networks.
\end{abstract}

\section{Introduction}
Identifying groups of related entities in a network is a ubiquitous task across scientific disciplines. This task is often called graph clustering, or community detection, and can be used to find similar proteins in a protein interaction network, group related organisms in a food web, identify communities in a social network, and classify web documents, among numerous other applications. 

Defining the right notion of a ``good'' community in a graph is an important precursor to developing successful algorithms for graph clustering. In
general, a good clustering is one in which nodes inside clusters are more densely connected to each other than to the rest of the graph. However, no
consensus exists as to the best way to determine the quality of network clusterings, and recent results show there cannot be such a consensus for the multiple possible reasons people may cluster data \cite{Peel2017ground}. Common objective functions studied by theoretical computer scientists include normalized cut, sparsest cut, conductance, and edge expansion, all of which measure some version of the cut-to-size ratio for a
single cluster in a graph. Other standards of clustering quality put a greater emphasis on the internal density of clusters, such as the cluster
deletion objective, which seeks to partition a graph into completely connected sets of nodes (cliques) by removing the fewest number of edges possible.

Arguably the most widely used multi-cluster objective for community detection is modularity, introduced by Newman and Girvan~\cite{newman2004modularity}. Modularity measures the difference between the true number of edges inside the clusters
of a given partitioning (``inner edges'') minus the \emph{expected} number of inner edges, where expectation is calculated with respect to a specific random graph model. 

There are a limited number of results which have begun to unify distinct clustering measures by introducing objective functions that are closely related to modularity and depend on a tunable clustering resolution parameter~\cite{delvenne2010stability,reichardt2006statistical}. Reichardt and Bornholdt developed an approach based on finding the minimum-energy state of an infinite range Potts spin glass. The resulting Hamiltonian function they study is viewed as a clustering objective with a resolution
parameter~$\gamma$, which can be used as a heuristic for detecting overlapping and hierarchical community structure in a network. When~$\gamma = 1$,
the authors prove an equivalence between minimizing the Hamiltonian and finding the maximum modularity partitioning of a
network~\cite{reichardt2006statistical}. Later, Delvenne et al.\ introduced a measure called the \emph{stability} of a clustering, which generalizes modularity and also is related to the normalized cut objective and Fiedler's spectral clustering method for certain values of an input parameter~\cite{delvenne2010stability}. 

The inherent difficulty of obtaining clusterings that are provably close to the optimal solution puts these objective functions at a disadvantage.
Although both the \emph{stability} and the Hamiltonian-Potts objectives provide useful interpretations for community detection,
there are no approximation guarantees for either: all current algorithms are heuristics.
Furthermore, it is known that maximizing modularity itself is not only NP-hard, but is also NP-hard to approximate to within any constant factor~\cite{dinh2016network}. 

\paragraph{Our Contributions}
In this paper, we introduce a new clustering framework based on a specially-weighted version of correlation
clustering~\cite{Bansal2004correlation}. Our partitioning objective for signed networks lies ``between'' the family of~$\pm 1$ complete instances
and the most general correlation clustering instances.
Our framework comes with several novel theoretical properties and leads to many connections between clustering
objectives that were previously not seen to be related. In summary, we provide:
\begin{itemize}[nosep]
	\item A novel framework \lcc for community detection that is related to modularity and the Hamiltonian, but is more amenable to approximation results.
	\item A proof that our framework interpolates between the sparsest cut objective and the cluster deletion problem,
	as we increase a single resolution parameter,~$\lambda$.
	\item Several successful algorithms for optimizing our new objective function in both theory and practice, including a 2-approximation for cluster deletion, which improves upon the previous best approximation factor of 3.
	\item A demonstration of our methods in a number of clustering applications, including social network analysis and mining cliques in collaboration networks.
\end{itemize}
\section{Background and Related Work}
Let~$G$ be an undirected and unweighted graph on~$n$ nodes~$V$, with~$m$ edges~$E$.
For all $v \in V$, let~$d_v$ be node~$v$'s degree. Given $S \subseteq V$, let
$\bar{S} = V\backslash S$ be the complement of~$S$ and $\vol(S) = \sum_{v\in
	S}d_v$ be its volume. For every two disjoint sets of vertices $S, T \subseteq V$,
$\cut(S,T)$ indicates the number of edges between~$S$ and~$T$. If $T =
\bar{S}$, we write $\cut(S) = \cut(S,\bar{S})$. Let~$E_S$ denote the interior
edge set of~$S$. The edge density of a cluster is $\den(S) =
|E_S|/ {|S| \choose 2 }$, the ratio between the number of edges to the
number of pairs of nodes in~$S$. By convention, the density of a single node is 1.
We now present background and related work that is foundational to our results, including definitions for several common clustering objectives.
\subsection{Correlation Clustering}
An instance of correlation clustering is given by a signed graph where every pair of nodes~$i$ and~$j$ possesses two non-negative weights,~$w_{ij}^+$ and~$w_{ij}^-$, to indicate how similar and how dissimilar~$i$ and~$j$ are, respectively. Typically only one of these weights is nonzero for each pair $i,j$. The objective can be expressed as an integer linear program (ILP):
\begin{equation}
\begin{array}{ll} \text{minimize}  & \sum_{i<j} w_{ij}^+ x_{ij} + w_{ij}^- (1-x_{ij})\\ \subjectto  & x_{ij} \leq x_{ik} + x_{jk}  \text{ for all $i,j,k$} \\ & x_{ij} \in \{0,1\} \text{ for all $i,j$.} \end{array}
\label{eq:cc}
\end{equation}
In the above formulation,~$x_{ij}$ represents ``distance'':
$x_{ij} = 0$ indicates that nodes~$i$ and~$j$ are clustered together, while $x_{ij} = 1$ indicates they are separated. Including triangle inequality constraints ensures the output of the above ILP defines a valid clustering of the nodes. This objective counts the total \emph{weight} of disagreements between the signed weights in the graph and a given clustering of its nodes.
The disagreement (or ``mistake'') weight of a pair~$i,j$ is~$w_{ij}^-$ if the nodes are clustered together, but~$w_{ij}^+$ if they are separated.
We can equivalently define the agreement weight to be $w_{ij}^+$ if $i,j$ are clustered together, but~$w_{ij}^-$ if they are separated.
The optimal clusterings for maximizing agreements and minimizing disagreements are identical, but it is more challenging to approximate the latter objective.

Correlation clustering was introduced by Bansal et al., who proved the problem is
NP-complete~\cite{Bansal2004correlation}. They gave a polynomial-time approximation
scheme for the maximization version and a constant-factor
approximation for minimizing disagreements in~$\pm1$-weighted graphs.
Subsequently, Charikar et al.\ gave a factor 4-approximation for minimizing disagreements and proved APX-hardness of this variant.
They also described an $O(\log n)$ approximation for minimization in general weighted graphs~\cite{charikar2005clustering},
proved independently by two different groups, who showed that minimizing disagreements is
equivalent to minimum multicut~\cite{Demaine2003,Emanuel2003}.

The problem has also been studied for the case where edges carry both positive and negative
weights, satisfying probability constraints: for all pairs~$i,j$,
$w_{ij}^+ + w_{ij}^- = 1$.
Ailon et al. gave a~$2.5$-approximation for this version of the problem based on an LP-relaxation, and
additionally developed a very fast algorithm, called \alg{Pivot}, that in expectation gives a
$3$-approximation~\cite{ailon2008aggregating}. Currently the best-known approximation factor for correlation clustering on~$\pm1$ instances
is slightly smaller than~$2.06$, obtained by a careful rounding of the canonical LP relaxation~\cite{chawla2015near}.


\subsection{Sparsest Cut and Normalized Cut}
One measure of cluster quality in an unsigned network~$G$ is the sparsest cut score, defined for a set~$S \subseteq V$ to be $\phi(S) = \cut(S)/|S| + \cut(S)/|\bar{S}| = n \cdot \cut(S)/ (|S||\bar{S}|)$.
Smaller values for~$\phi(S)$ are desirable, since they indicate that~$S$, in spite of its size, is only loosely connected to the rest of the graph.
This measure differs by at most a factor of two from the related edge expansion measure: $\cut(S)/(\min \{|S|, |\bar{S}|
\})$. If we replace $|S|$ with $\vol(S)$ in these two objectives, we obtain the normalized cut and the conductance
measure respectively. In our work we focus on a multiplicative scaling of the sparsest cut objective that we call the
\emph{scaled sparsest cut}: $\psi(S) = \phi(S)/n = {\cut(S)}/(|S| |\bar{S}|)$, which is identical to sparsest cut in terms of multiplicative approximations. The best known approximation for finding the minimum sparsest cut of a graph is an $O(\sqrt{\log n})$-approximation algorithm due to Arora et al.~\cite{arora2009expander}.

\subsection{Modularity and the Hamiltonian}
One very popular measure of clustering quality is modularity, introduced in its most basic form by Newman and Girvan~\cite{newman2004modularity}. We more closely follow the presentation of modularity given by Newman~\cite{newman2006finding}.
The modularity~$Q$ of an underlying clustering is:
\begin{equation}
\label{modularity}
{Q(x) = \frac{1}{2m} \sum_{i \neq j} \left( A_{ij} - P_{ij} \right) (1-x_{ij})\,},
\end{equation}
where $A_{ij} = 1$ if nodes~$i$ and~$j$ are adjacent, and zero otherwise, and~$x_{ij}$ is again the binary variable indicating ``distance'' between~$i$ and~$j$ in the corresponding clustering. The value~$P_{ij}$ represents the probability of an edge existing between~$i$ and~$j$ in a specific random graph model.
The intent of this measure is to reward clusterings in which the actual number of edges inside a cluster is greater than the \emph{expected} number of edges in the cluster, as determined by the choice for~$P_{ij}$. Although there are many options, it is standard in the literature to set $P_{ij} ={d_id_j}/(2m)$, since this preserves both the degree distribution and the expected number of edges between the original graph and null model.
Many generalizations have been introduced for modularity, including an extension to multislice networks, which allow one to study the evolution of communities in a network over time~\cite{mucha2010community}.

By slightly editing the modularity function, we obtain the Hamiltonian objective of Reichardt and Bornholdt~\cite{reichardt2006statistical}:
\begin{equation}
\label{Hamiltonian}
{ \mathcal{H}(x) = - \sum_{i\neq j} \left(A_{ij} - \gamma P_{ij} \right)(1-x_{ij})\,}.
\end{equation}
The primary difference between this and modularity is the inclusion of a clustering resolution parameter~$\gamma$. If we fix $\gamma = 1$, minimizing~(\ref{Hamiltonian}) is equivalent to maximizing modularity. When varied, this parameter controls how much a clustering is penalized for putting two non-adjacent nodes together or separating adjacent nodes. 
{Recently Jeub et al. presented a new strategy for sampling values of this resolution parameter to produce very good hierarchical clusterings of an input graph without resorting to ad-hoc methods for finding appropriate values for $\gamma$~\cite{jeub2017multiresolution}.}



The Hamiltonian objective is in turn closely related to the \emph{stability} of a clustering as defined by Delvenne et al., another generalization of modularity~\cite{delvenne2010stability}. Roughly speaking, the stability of a partition measures the likelihood that a random walker, beginning at a node and following outgoing edges uniformly at random, will end up in the cluster it started in after a random walk of length~$t$. This~$t$ serves as a resolution parameter, since the walker will tend to ``wander" farther when~$t$ is increased, leading to the formation of larger clusters when the stability is maximized. Delvenne et al.\ showed that objective~(\ref{Hamiltonian}) is equivalent to a linearized version of the stability measure for a specific range of time steps~$t$~\cite{delvenne2010stability}.

\newln{A number of equivalence results between modularity and other clustering objectives have been noted in previous work. Agarwal et al.\ showed that modularity is equivalent at optimum to a special case of correlation clustering~\cite{agarwal2008metricmod}. Newman demonstrated that maximizing modularity with a resolution parameter is equivalent to maximizing a log-likelihood function for the degree-corrected stochastic block model~\cite{newman2013equivalence}. Finally, an equivalence between a normalized version of modularity and a multi-cluster generalization of normalized cut has been independently shown by a number of authors~\cite{bolla2011penalized,wangModularityncut1,yuModularityncut2}.}

\subsection{Cluster Deletion} 
Cluster deletion is the problem of finding a minimum number of edges in~$G$ to be deleted in order to convert~$G$ into a disjoint set of cliques. This
can be viewed as stricter version of correlation clustering, in which we want to minimize disagreements, but we are strictly prohibited from making mistakes at negative edges. This problem was first studied by Ben-Dor et al.~\cite{bendor1999clustering}, later
formalized in the work of Natanzon et al.~\cite{natanzon1999complexity}, who proved it is NP-hard,
and Shamir et al.~\cite{shamir2004cluster}, who showed it is APX-hard.
The latter studied the problem in conjunction with
other related \emph{edge-modification} problems, including cluster completion
and cluster editing.

Numerous fixed parameter tractability results are known for cluster deletion
\cite{bocker2011evenfaster,Gramm2003,Gramm2004,Damaschke2009}, as well many
results regarding special graphs for which the problem can be solved in
polynomial time \cite{gao2013cluster,bonomo2015complexity,
dessmark2007edgeclique,bonomo2015one}. 
\newln{Dessmark et al.\ gave an $O(\log n)$ approximation for the problem when the edges have arbitrary weights, and proved that in the unweighted case, recursively finding maximum cliques will return a clustering with a cluster deletion score within a factor 2 of optimal~\cite{dessmark2007edgeclique}. In general however this latter procedure is NP-hard. Charikar et al.\ showed that a slight adaptation of their correlation clustering algorithm produces a 4-approximation~\cite{charikar2005clustering}. We note finally that although van Zuylen and Williamson make no explicit mention of cluster deletion, their results for constrained correlation clustering imply a 3-approximation for the problem (see Theorem 4.2 in~\cite{zuylen2009deterministic}).}
\section{Theoretical Results}
%
Our novel clustering framework takes an unsigned graph $G=(V,E)$ and converts it into a signed graph~$G'=(V,E^+,E^-)$ on the same set of nodes,~$V$,
for a fixed clustering resolution parameter $\lambda \in (0,1)$. Partitioning~$G'$ with respect to the correlation clustering objective will then induce a clustering on~$G$. To construct the signed graph, we first introduce a node weight~$w_v$ for each $v \in V$. If $(i,j)\in E$, we place a positive edge between nodes~$i$ and~$j$ in~$G'$, with weight $(1-\lambda w_iw_j)$. For $(i,j) \notin E$, we place a negative edge between~$i$ and~$j$ in~$G'$, with weight $\lambda w_iw_j$. We consider two different choices for node weights~$w_v$: setting $w_v = 1$ for all $v$ (\emph{standard}) or choosing $w_v = d_v$ (\emph{degree-weighted}). 
In Figure~\ref{lccpic} we illustrate the process of converting~$G$ into the \lcc signed graph,~$G'$. The goal of \lcc is to find the clustering that minimizes disagreements in $G'$, or equivalently minimizes the following objective function expressed in terms of edges and non-edges in $G$:
\begin{equation}
\label{lccgen}
\lambda \mathcal{CC}(x) =  \sum_{ \substack{(i,j) \in E}} (1 - \lambda w_i w_j) x_{ij} + \sum_{ \substack{(i,j) \notin E}} \lambda w_i w_j (1-x_{ij})
\end{equation}
$x = (x_{ij})$ represents the binary distances for the clustering.
\begin{figure}
	\centering
	\includegraphics[width=\linewidth]{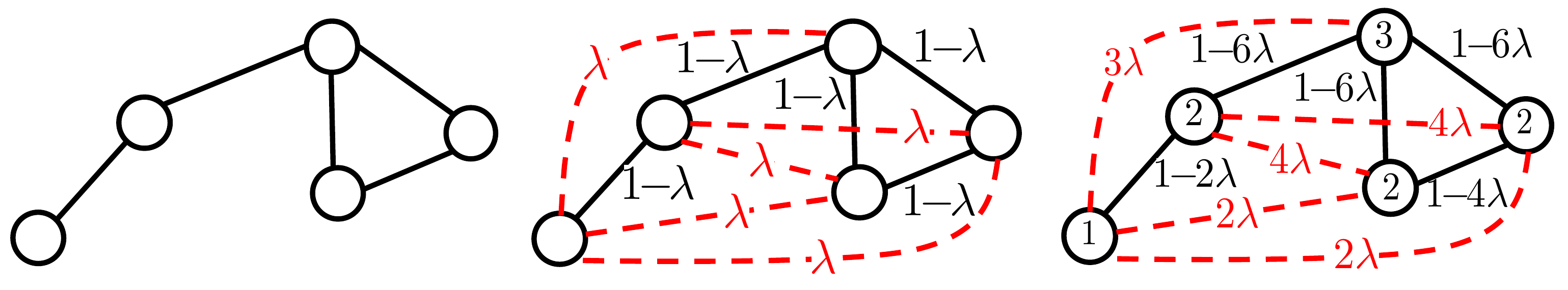}
	\caption{We convert a toy graph (left) into a signed graph for standard (middle) and degree-weighted (right) \lcc. Dashed red lines indicate negative edges. Partitioning the signed graph via correlation clustering induces a clustering on the original unsigned graph.}
	\label{lccpic}
\end{figure}
\subsection{Connection to Modularity}
Despite a significant difference in approach and interpretation, the clustering that minimizes disagreements is the same clustering that minimizes the
Hamiltonian objective~(\ref{Hamiltonian}), for a certain choice of parameters. To see this, we introduce node adjacency variables $A_{ij}$ in objective~\eqref{lccgen} and perform a few steps of algebra:
\begin{align*}
\lambda \mathcal{CC}(x) &=  \sum_{(i,j) \in E} (A_{ij} - \lambda w_i w_j)x_{ij} - \sum_{(i,j) \notin E} (A_{ij} - \lambda w_i w_j )(1-x_{ij}) \\
&= \sum_{(i,j) \in E} (A_{ij} - \lambda w_i w_j)x_{ij} - \sum_{(i,j) \in E} (A_{ij} - \lambda w_i w_j)\\
& \hspace{.5cm} + \sum_{(i,j) \in E} (1 - \lambda w_i w_j) - \sum_{(i,j) \notin E} (A_{ij} - \lambda w_i w_j)(1-x_{ij}) \\
&= \sum_{(i,j) \in E} (1 - \lambda w_i w_j) - \sum_{i<j} (A_{ij} - \lambda w_i w_j)(1-x_{ij})\,.
\end{align*}

Choosing $P_{ij} = {w_iw_j}/(2m)$ and $\gamma = 2m\lambda$, we see that:
\begin{equation}
\label{equivSI}
\lambda \mathcal{CC}(x) = \sum_{(i,j) \in E} (1 - \lambda w_i w_j) + \frac{\mathcal{H}(x)}{2}\,,
\end{equation}
where the first term is just a constant. This theorem follows:

\begin{theorem}
Minimizing disagreements for the \lcc objective is equivalent to minimizing~$\mathcal{H}(x)$.
\end{theorem}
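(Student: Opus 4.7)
The plan is to show that $\lambda\mathcal{CC}(x)$ and $\mathcal{H}(x)$ differ only by a positive multiplicative constant plus an $x$-independent additive constant, with the appropriate identification of the parameters $\gamma$ and $P_{ij}$. Since the argmin of an objective is invariant under addition of a constant and multiplication by a positive scalar, the equivalence of the two minimization problems will follow immediately.

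To carry this out, I would start from the definition of $\lambda\mathcal{CC}(x)$ in \eqref{lccgen} and rewrite each of the two sums so that all pairs $i<j$ can eventually be combined. The first step is to use the adjacency indicator $A_{ij}$ to absorb the edge-vs.-nonedge distinction: on $(i,j)\in E$ we have $A_{ij}=1$, so $(1-\lambda w_iw_j)x_{ij} = (A_{ij}-\lambda w_iw_j)x_{ij} + (1-\lambda w_iw_j)(1-x_{ij}-\dots)$ after rearrangement, and on $(i,j)\notin E$ we have $A_{ij}=0$, so $\lambda w_iw_j(1-x_{ij}) = -(A_{ij}-\lambda w_iw_j)(1-x_{ij})$. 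Substituting $1-x_{ij}$ in favor of $x_{ij}$ where needed, the two sums should merge into a single sum over all unordered pairs $i<j$ of the quantity $(A_{ij}-\lambda w_iw_j)(1-x_{ij})$, plus a term that does not depend on $x$.

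Once the expression has been manipulated into the form already hinted at in the excerpt,
\[
\lambda\mathcal{CC}(x) = \sum_{(i,j)\in E}(1-\lambda w_iw_j) - \sum_{i<j}(A_{ij}-\lambda w_iw_j)(1-x_{ij})\,,
\]
I would make the substitutions $\gamma = 2m\lambda$ and $P_{ij} = w_iw_j/(2m)$, which turn the second sum into exactly $\tfrac{1}{2}\mathcal{H}(x)$ as defined in \eqref{Hamiltonian} (the factor $\tfrac12$ coming from collapsing the $i\neq j$ sum to an $i<j$ sum). This yields \eqref{equivSI} and, since the first sum $\sum_{(i,j)\in E}(1-\lambda w_iw_j)$ depends only on the graph, the node weights, and $\lambda$---not on $x$---any $x$ that minimizes $\lambda\mathcal{CC}$ must also minimize $\mathcal{H}$, and vice versa.

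The main obstacle is really only bookkeeping: one has to be careful about the $i\neq j$ versus $i<j$ indexing in the definition of $\mathcal{H}$ (which introduces the factor of $\tfrac{1}{2}$), and about choosing the right way to introduce $A_{ij}$ so that the edge and non-edge contributions combine correctly. No nontrivial optimization argument is required beyond observing that affine transformations with positive leading coefficient preserve the set of minimizers.
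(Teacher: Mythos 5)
Your proposal is correct and follows essentially the same route as the paper: introduce the adjacency indicator $A_{ij}$ to merge the edge and non-edge sums into $\sum_{i<j}(A_{ij}-\lambda w_iw_j)(1-x_{ij})$ plus an $x$-independent constant, then identify $\gamma = 2m\lambda$ and $P_{ij}=w_iw_j/(2m)$ to recover $\tfrac{1}{2}\mathcal{H}(x)$. The only blemish is the garbled intermediate identity ``$(1-\lambda w_iw_j)(1-x_{ij}-\dots)$,'' but the target identity \eqref{equivSI} and the concluding observation about affine transformations are exactly the paper's argument.
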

The choice~$P_{ij} = {w_iw_j}/(2m)$ is reminiscent of the graph null model most commonly used for modularity and the Hamiltonian. This best highlights the similarity between these objectives and degree-weighted \lccb. 

\subsection{Standard \lcc}
While degree-weighted \lcc is more closely related to modularity and the Hamiltonian, standard \lcc (setting $w_v = 1$ for every $v \in V$) 
leads to strong connections between the sparsest cut objective and cluster deletion. This version corresponds to solving a correlation clustering problem where all positive edges have equal weight,~$(1-\lambda)$, while all negative edges have equal weight,~$\lambda$. The objective function for minimizing disagreements is
\begin{equation}
\label{lamobjstand}
\min \sum_{(i,j)\in E^+} (1-\lambda) x_{ij} + \sum_{(i,j) \in E^-} \lambda (1-x_{ij})\,,
\end{equation}
where we include the same constraints as in ILP~(\ref{eq:cc}). This is a strict
generalization of the unit-weight correlation clustering problem~\cite{Bansal2004correlation} ($\lambda = 1/2$)
indicating the problem in general is NP-hard (though it admits several approximation algorithms).
If~$\lambda$ is~$0$ or~$1$, the problem is trivial to solve: put all
nodes in one cluster or put each node in a singleton cluster, respectively. By selecting values
for~$\lambda$ other than~$0$, $1/2$, or~$1$, we uncover
subtler connections between identifying sparse cuts and finding dense subgraphs in the network.
\subsection{Connection to Sparsest Cut}
Given~$G$ and~$\lambda$, the weight of positive-edge mistakes in the \lcc objective made by a two-clustering $\mathcal{C} = \{S, \bar{S}\}$ equals the weight of edges crossing the cut: $(1-\lambda)\cut(S)$. To compute the weight of negative-edge mistakes, we take the weight of all negative edges in the entire network, $\lambda \left({n \choose 2} - |E|  \right)$, and then subtract the weight of negative edges between $S$ and $\bar{S}$: $\lambda\left(|S| |\bar{S}| - \cut(S) \right)$. Adding together all terms we find that the \lcc objective for this clustering is
\begin{equation}
\label{eq:2cutobj}
 { \cut(S,\bar{S}) - \lambda |S| |\bar{S}| +  \lambda {n
	\choose 2} -\lambda |E|\,}.
\end{equation}
Note that if we minimize~\eqref{eq:2cutobj} over all 2-clusterings, we solve the decision version of the minimum scaled sparsest cut problem: a few steps of algebra confirm that there is some set $S \subseteq V$ with $\psi(S) = \cut(S)/(|S| |\bar{S}|) < \lambda$ if and only if (\ref{eq:2cutobj}) is less than $\lambda {n \choose 2} -\lambda |E|$.

In a similar way we can show that objective~\eqref{lamobjstand} is equivalent to
\begin{equation}
\label{eq:cutobj}
\min \,\, \frac{1}{2}\sum_{i=1}^k \cut(S_i) - \frac{\lambda}{2} \sum_{i=1}^k
|S_i| |\bar{S_i}| +  \lambda {n \choose 2} -\lambda |E|\,,
\end{equation}
where we minimize over all clusterings of $G$ (note that the number of clusters $k$ is determined automatically by optimizing the objective).
 In this case, optimally solving objective~(\ref{eq:cutobj}) will tell us whether we can find a clustering $\mathcal{C} = \{S_1, S_2, \hdots , S_k\}$ such that
\[\frac{\sum_{i=1}^k \cut(S_i,\bar{S_i})}{ \sum_{j=1}^k |S_j| |\bar{S_j}|} < \lambda.\]
Hence \lcc can be viewed as a multi-cluster generalization of the decision version of minimum sparsest cut. We now prove an even deeper connection between sparsest cut and \lcc. Using degree-weighted \lcc yields an analogous result for normalized cut.
\begin{theorem}
	\label{thm:lambound}
	Let~$\lambda^*$ be the minimum scaled sparsest cut for a graph $G$.
	\begin{enumerate}[label=(\alph*)]
		\item For all $\lambda > \lambda^*$, optimal solution~(\ref{eq:cutobj}) partitions~$G$ into two or more clusters,
		each of which has scaled sparsest cut~$\leq \lambda$. There exists some~$\lambda' > \lambda^*$ such that the optimal
clustering for \lcc is the minimum sparsest cut partition.
		\item For~$\lambda \leq \lambda^*$, it is optimal to place all nodes into a single cluster.
	\end{enumerate}

\end{theorem}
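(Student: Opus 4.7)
The plan is to rewrite objective~(\ref{eq:cutobj}) in the regrouped form $g(\mathcal{C},\lambda) + \lambda\binom{n}{2} - \lambda|E|$, where $g(\mathcal{C},\lambda) := E^{\text{cross}}(\mathcal{C}) - \lambda N^{\text{cross}}(\mathcal{C})$, with $E^{\text{cross}}(\mathcal{C}) = \tfrac12\sum_i \cut(S_i)$ counting inter-cluster edges and $N^{\text{cross}}(\mathcal{C}) = \tfrac12\sum_i |S_i||\bar S_i|$ counting pairs of nodes in different clusters. Only $g$ depends on the clustering, and $g(\{V\},\lambda) = 0$, so comparisons between any clusterings reduce to comparing $g$-values.

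For part~(b), the defining property of $\lambda^*$ gives $\cut(S_i) \geq \lambda^*|S_i||\bar S_i| \geq \lambda|S_i||\bar S_i|$ for every cluster $S_i$ and every $\lambda \leq \lambda^*$; summing and halving yields $g(\mathcal{C},\lambda) \geq 0$, so $\{V\}$ is optimal. For the first claim of part~(a), taking any $S^*$ attaining $\psi(S^*) = \lambda^*$ and evaluating on the 2-clustering $\{S^*,\bar{S^*}\}$ gives $g(\{S^*,\bar{S^*}\},\lambda) = (\lambda^*-\lambda)|S^*||\bar{S^*}| < 0$ whenever $\lambda > \lambda^*$, so the optimum must have at least two clusters.

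The second claim of part~(a) follows from a local merge-exchange argument. If $\mathcal{C} = \{S_1,\ldots,S_k\}$ is \lcc-optimal and $\mathcal{C}'$ is obtained by merging clusters $S_i$ and $S_j$, then $g(\mathcal{C}',\lambda) - g(\mathcal{C},\lambda) = -\cut(S_i,S_j) + \lambda|S_i||S_j|$. Optimality forces this to be nonnegative, so $\cut(S_i,S_j) \leq \lambda|S_i||S_j|$ for every pair. Summing over $i \neq j$ with $j$ fixed, and using $\cut(S_j) = \sum_{i\neq j}\cut(S_i,S_j)$ together with $|S_j||\bar S_j| = \sum_{i\neq j}|S_j||S_i|$, gives $\cut(S_j) \leq \lambda|S_j||\bar S_j|$, i.e., $\psi(S_j) \leq \lambda$.

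The main obstacle is the final existence claim, which requires a perturbation argument at $\lambda = \lambda^*$. I would view $g(\mathcal{C},\cdot)$ as a line of slope $-N^{\text{cross}}(\mathcal{C})$, making $\min_\mathcal{C} g(\mathcal{C},\lambda)$ a concave piecewise-linear function with finitely many breakpoints. Both $\{V\}$ and $\{S^*,\bar{S^*}\}$ attain the minimum value~$0$ at $\lambda^*$, but their slopes are $0$ and $-|S^*||\bar{S^*}| < 0$ respectively, so immediately to the right of $\lambda^*$ the envelope drops below zero and $\{S^*,\bar{S^*}\}$ takes over as optimal --- provided we rule out any clustering $\mathcal{C}$ with $g(\mathcal{C},\lambda^*) = 0$ and $N^{\text{cross}}(\mathcal{C}) > |S^*||\bar{S^*}|$. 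Choosing $S^*$ to maximize $|S^*||\bar{S^*}|$ among sets realizing $\lambda^*$, and applying the merge inequality at $\lambda = \lambda^*$ to any such tied $\mathcal{C}$, forces $\cut(S_i,S_j) = \lambda^*|S_i||S_j|$ between every pair of its clusters, so that contracting any single cluster of $\mathcal{C}$ against the rest yields a new set realizing $\lambda^*$, and the maximality of $|S^*||\bar{S^*}|$ can then be invoked to finish the case analysis and produce the desired $\lambda' = \lambda^* + \varepsilon$ for small $\varepsilon > 0$.
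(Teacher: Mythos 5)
Your decomposition, your proof of part~(b), the lower bound obtained from the two-clustering $\{S^*,\bar{S^*}\}$, and the merge-exchange argument giving $\psi(S_j)\le\lambda$ for every cluster of an optimal solution are all correct and essentially identical to the paper's proof (your part~(b) is in fact more direct: the paper derives it by contradiction from part~(a)). The genuine problem is the final existence claim, which you correctly flag as the main obstacle but do not close. Your concave-envelope framing reduces it to showing that no clustering $\mathcal{C}$ with $g(\mathcal{C},\lambda^*)=0$ has $N^{\text{cross}}(\mathcal{C})>|S^*||\bar{S^*}|$, and your proposed finish --- derive $\cut(S_i,S_j)=\lambda^*|S_i||S_j|$ for all pairs, conclude that each $S_i$ (and each union of the $S_i$) realizes $\lambda^*$, then invoke maximality of $|S^*||\bar{S^*}|$ --- only yields $|S_i||\bar{S_i}|\le|S^*||\bar{S^*}|$ for each $i$. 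That does not bound $N^{\text{cross}}(\mathcal{C})=\tfrac12\sum_i|S_i||\bar{S_i}|$ by $|S^*||\bar{S^*}|$ once $k\ge3$: with three equal-size parts, each realizing $\lambda^*$ and with all pairwise cut densities exactly $\lambda^*$, one gets $N^{\text{cross}}=3s^2$ versus $2s^2$ for any two-clustering obtained by merging, so the three-clustering has strictly steeper slope and overtakes every two-clustering immediately to the right of $\lambda^*$. Three cliques $K_s$ joined pairwise by $c$ edges, with $c/s^2$ below every other achievable scaled sparsest cut score, is exactly such a configuration, so the case you defer is not vacuous and cannot be dismissed by the maximality of $|S^*||\bar{S^*}|$ alone.

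For comparison, the paper reaches this last claim by a different, discreteness-based route: it takes $\tilde\lambda$ to be the second-smallest achievable scaled sparsest cut score, sets $\lambda'=(\lambda^*+\tilde\lambda)/2$, concludes from the merge inequality that every cluster of the optimum has $\psi(S_i)<\tilde\lambda$ and hence $\psi(S_i)=\lambda^*$ exactly, and then asserts that this forces exactly two clusters. Your slope analysis is the more honest account of what actually needs to be proved at this step, but as written neither your sketch nor a direct appeal to the maximality of $|S^*||\bar{S^*}|$ completes it: one needs an additional structural argument (or hypothesis) ruling out partitions into three or more parts, all realizing $\lambda^*$ with tight pairwise cut densities, since that is precisely the configuration on which the two-cluster conclusion can fail.
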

\begin{proof}
\textbf{Statement (a)} Let~$S^*$ be some optimal sparsest cut-inducing set in~$G$, i.e., \[\psi(S^*) = {\cut(S^*)}/ ({|S^*| | \bar{S}^*|}) = \lambda^*.\]
The \lcc objective corresponding to~$\mathcal{C} = \{S^*, \bar{S}^* \}$ is
\begin{equation}
\label{optssc}
\cut(S^*) - \lambda|S^*| | \bar{S}^*| + \lambda {n \choose 2} -\lambda|E|\,.
\end{equation} 
When minimizing objective~(\ref{eq:cutobj}), we can always obtain a score of $\lambda {n \choose 2} -\lambda |E|$
by placing all nodes into a single cluster. Note however that the score of clustering $\{S^*, \bar{S}^*\}$ in expression~\eqref{optssc} is strictly less than $\lambda {n \choose 2} -\lambda|E|$ for all $\lambda > \lambda^*$. Even if $\{S^*, \bar{S}^*\}$ is not optimal, this means that when $\lambda>\lambda^*$, we can do strictly better than placing all nodes into one cluster. In this case let $\mathcal{C}^*$ be the optimal \lcc clustering and consider two of its clusters:~$S_i$ and~$S_j$. The weight of disagreements between~$S_i$ and~$S_j$ is equal to the number of positive edges between them times the weight of a positive edge: $(1-\lambda)\cut(S_i,S_j)$. Should we form a new clustering by merging~$S_i$ and~$S_j$, these
positive disagreements will disappear; in turn, we would introduce
$\lambda |S_i||S_j| - \lambda \cut(S_i,S_j)$ new mistakes, being negative edges between the clusters.
Because we assumed $\mathcal{C}^*$ is optimal, we know that we cannot decrease the objective by merging two of the clusters, implying that
\begin{equation*}
	(1-\lambda)\cut(S_i,S_j) - \left(\lambda |S_i||S_j| - \lambda \cut(S_i,S_j)\right) = \cut(S_i,S_j) - \lambda|S_i||S_j| \leq 0\,. 
\end{equation*}
Given this, we fix an arbitrary cluster~$S_i$ and perform a sum over all other clusters to see that
\[ { \sum_{j \neq i}} \cut(S_i,S_j) - {\textstyle \sum_{j \neq i}} \lambda |S_i||S_j| \leq 0\, \]
\[\implies \cut(S_i, \bar{S_i}) - \lambda |S_i||\bar{S_i}| \leq 0 \implies {\cut(S_i, \bar{S_i})}/\left({|S_i||\bar{S_i}|}\right)
\leq \lambda\,, \]
proving the desired upper bound on scaled sparsest cut.

Since~$G$ is a finite graph, there are a finite number of scaled sparsest cut
scores that can be induced by a subset of~$V$. Let~$\tilde{\lambda}$ be the second-smallest scaled sparsest cut score achieved,
so $\tilde{\lambda} > \lambda^*$. If we set $\lambda' = (\lambda^* + \tilde{\lambda})/2$, then the optimal \lcc clustering
produces at least two clusters, since $\lambda' > \lambda^*$, and each cluster has scaled sparsest cut at most $\lambda' < \tilde{\lambda}$.
By our selection of~$\tilde{\lambda}$, all clusters returned must have scaled sparsest cut exactly equal to~$\lambda^*$,
which is only possible if the clustering returned has two clusters. Hence this clustering is a minimum sparsest cut partition of the network.

\textbf{Statement (b)}
If $\lambda < \lambda^*$, forming a single cluster must be optimal, otherwise we could invoke Statement~(a) to assert the existence of some nontrivial cluster with scaled sparsest cut less than or equal to $\lambda < \lambda^*$, contradicting the minimality of~$\lambda^*$. If $\lambda = \lambda^*$, forming a single cluster or using the clustering $\mathcal{C} = \{S^*, \bar{S}^* \}$ yield the same objective score, which is again optimal for the same reason.
\end{proof}

\subsection{Connection to Cluster Deletion}
For large~$\lambda$ our problem becomes more similar to cluster deletion. We can reduce any cluster deletion problem to correlation
clustering by taking the input graph~$G$ and introducing a negative edge of weight~``$\infty$''
between every pair of non-adjacent nodes. This guarantees that optimally solving correlation clustering will yield clusters that all correspond to cliques in~$G$. Furthermore, the weight of disagreements will be the number of edges in~$G$ that are cut, i.e., the cluster deletion score. We can obtain a generalization of cluster deletion by instead choosing the weight of each negative edge to be $\alpha < \infty$. The corresponding objective is
\begin{equation}
\label{alpobj}
\sum_{(i,j)\in E^+} x_{ij} + \sum_{(i,j) \in E^-} \alpha (1-x_{ij})\,.
\end{equation}
If we substitute $\alpha = \lambda/ (1-\lambda)$ we see this differs from objective~(\ref{lamobjstand}) only by a multiplicative constant, and is
therefore equivalent in terms of approximation. When $\alpha > 1$, putting dissimilar nodes together will be more expensive than cutting positive
edges, so we would expect that the clustering which optimizes the \lcc objective will separate~$G$ into dense clusters that are ``nearly'' cliques. We formalize this with a simple theorem and corollary.
\begin{theorem}
	\label{thm:dense}
	If~$\mathcal{C}$ minimizes the \lcc objective for the unsigned network $G = (V,E)$, then the edge density of every cluster in~$\mathcal{C}$ is at least~$\lambda$.
\end{theorem}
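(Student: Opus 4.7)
The plan is a local exchange argument by contradiction. Suppose some cluster $S$ in an optimal clustering $\mathcal{C}$ satisfies $\den(S) < \lambda$. I would then compare $\mathcal{C}$ to the clustering $\mathcal{C}'$ obtained by shattering $S$ into $|S|$ singletons while leaving every other cluster intact, and show that $\mathcal{C}'$ strictly improves the \lcc objective, contradicting optimality of $\mathcal{C}$.

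The first key step is to notice that only pairs $\{i,j\} \subseteq S$ contribute to the objective difference between $\mathcal{C}$ and $\mathcal{C}'$: any pair with at least one endpoint outside $S$ lies across clusters in both clusterings and keeps the same $x_{ij}$ value in objective~\eqref{lamobjstand}, while pairs disjoint from $S$ are also untouched. So the whole difference is governed by what happens inside $S$.

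The second step is to read the intra-$S$ contributions straight off of~\eqref{lamobjstand}. In $\mathcal{C}$ we have $x_{ij}=0$ for $i,j \in S$, so only non-edges contribute, yielding $\lambda\bigl(\binom{|S|}{2} - |E_S|\bigr)$. In $\mathcal{C}'$ we have $x_{ij}=1$, so only edges contribute, yielding $(1-\lambda)|E_S|$. A short simplification then reduces the change in objective to $|E_S| - \lambda \binom{|S|}{2}$, which is strictly negative precisely when $\den(S) < \lambda$.

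There is essentially no technical obstacle here; the only delicate point is the bookkeeping that confirms the inter-cluster terms really do cancel when only a single cluster is modified. As a byproduct, the same computation shows that shattering is always an improving move for any cluster whose density drops below $\lambda$, which gives a useful local-optimality characterization that could be reused elsewhere in the paper.
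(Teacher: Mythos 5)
Your proposal is correct and follows essentially the same argument as the paper: shatter the cluster $S$ into singletons, observe that only intra-$S$ pairs change, and compute the net change $(1-\lambda)|E_S| - \lambda\bigl(\binom{|S|}{2} - |E_S|\bigr) = |E_S| - \lambda\binom{|S|}{2}$, which optimality forces to be nonnegative. The paper phrases this directly as a nonnegativity condition rather than as a contradiction, but the exchange move and the computation are identical.
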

\begin{proof}
	Take a cluster~$S \in \mathcal{C}$ and consider what would happen if we broke apart~$S$ so that each of its nodes
	were instead placed into its own singleton cluster.
	This means we are now making mistakes at every positive edge previously in $S$, which increases the weight of disagreements by~$(1-\lambda)|E_S|$. On the other hand, there are no longer negative mistakes between
	nodes in~$S$, so the \lcc objective would simultaneously decrease by
	$\lambda \left({|S| \choose 2} - |E_S| \right)$.
	The total change in the objective made by pulverizing~$S$ is
	\[{\textstyle (1-\lambda)|E_S| -  \lambda \left({|S| \choose 2} - |E_S| \right) = |E_S| - \lambda {|S| \choose 2}\, }, \]
	which must be nonnegative, since~$\mathcal{C}$ is optimal, so~$|E_S| - \lambda {|S| \choose 2}  \geq 0 \implies \text{density}(S) ={|E_S|}/{ {|S| \choose 2} } \geq \lambda$.
\end{proof}
%
%
%
\begin{corollary}
	Let~$G$ have~$m$ edges. For every $\lambda > m/(m+1)$, optimizing \lcc is equivalent to optimizing cluster deletion. 
\end{corollary}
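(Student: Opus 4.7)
The plan is to leverage Theorem~\ref{thm:dense} to argue that when $\lambda > m/(m+1)$, every cluster in an optimal \lcc clustering is not merely dense but in fact a clique, which forces the \lcc objective to coincide (up to a positive multiplicative constant) with the cluster deletion objective on the feasible set of clique partitions.

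First, I would apply Theorem~\ref{thm:dense} to conclude that if $\mathcal{C}$ is any optimal \lcc clustering, then every cluster $S \in \mathcal{C}$ satisfies $|E_S|/\binom{|S|}{2} \geq \lambda$. The key step is then to upgrade this bound to the statement that $S$ induces a clique in $G$. The argument is a short contradiction: if $S$ is not a clique, then at least one pair of nodes in $S$ is non-adjacent, so $|E_S| \leq \binom{|S|}{2} - 1$. Combined with the trivial bound $|E_S| \leq m$ and the density bound $|E_S| \geq \lambda \binom{|S|}{2}$, one obtains a chain of inequalities whose consistency forces $\lambda \leq m/(m+1)$, contradicting the hypothesis. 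Concretely, $\lambda\binom{|S|}{2} \le \binom{|S|}{2}-1$ gives $\binom{|S|}{2} \le 1/(1-\lambda)$, while $\lambda\binom{|S|}{2} \le m$ gives $\binom{|S|}{2} \le m/\lambda$; plugging $\lambda > m/(m+1)$ into a tidy version of these two inequalities yields a contradiction.

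Once every cluster of every optimal \lcc clustering is a clique, the second sum in objective~\eqref{lamobjstand} vanishes, because a partition into cliques makes no negative-edge disagreements: every non-edge $(i,j)\in E^-$ lies between two different clusters, so $x_{ij}=1$. The objective therefore reduces to $(1-\lambda)\sum_{(i,j)\in E^+} x_{ij}$, which is exactly $(1-\lambda)$ times the number of positive edges cut by the partition, that is, $(1-\lambda)$ times the cluster deletion score. Since $1-\lambda > 0$, minimizing \lcc over clique partitions is the same as minimizing the cluster deletion score.

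To finish, I would observe that the optimum over all clusterings coincides with the optimum over clique partitions: by the first part of the argument, any optimal \lcc clustering is automatically a clique partition, and conversely, any optimal cluster deletion partition is a clique partition whose \lcc value is $(1-\lambda)$ times its cluster deletion score. Thus the two optimization problems have the same optimal solutions. The main obstacle is verifying the density-to-clique upgrade cleanly; the only mild subtlety is making sure both bounds $|E_S| \leq m$ and $|E_S| \leq \binom{|S|}{2}-1$ are used together, since neither one alone is strong enough to force $\binom{|S|}{2} = |E_S|$.
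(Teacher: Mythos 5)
Your proposal follows essentially the same route as the paper: invoke Theorem~\ref{thm:dense} to get density at least $\lambda$, combine the two bounds $|E_S|\le m$ and $|E_S|\le\binom{|S|}{2}-1$ to force every cluster to be a clique (equivalently, density $\le |E_S|/(|E_S|+1)\le m/(m+1)<\lambda$ for a non-clique), and then observe the objective reduces to $(1-\lambda)$ times the cluster deletion score. One small slip: $\lambda\binom{|S|}{2}\le\binom{|S|}{2}-1$ gives $\binom{|S|}{2}\ge 1/(1-\lambda)$, not $\le$; with that direction corrected, combining it with $\binom{|S|}{2}\le m/\lambda$ yields $\lambda\le m/(m+1)$ and the intended contradiction.
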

\begin{proof}
All output clusters must have density at least $m/(m+1)$, which is only possible if the density is actually one,
since~$m$ is the total number of edges in the graph. Therefore all clusters are cliques and the \lcc and cluster deletion objectives differ only by a multiplicative constant $(1-\lambda)$.
\end{proof}

\subsection{Equivalences and Approximations}
We summarize the equivalence relationships between \lcc and other objectives in Figure~\ref{fig:lcc}. \newln{We additionally note that the results of Newman~\cite{newman2013equivalence} imply that \lcc is also equivalent to the log-likelihood function for the stochastic block model. This holds both in the case of degree-corrected SBM (which is equivalent to degree-weighted \lcc) and the standard SBM (corresponding to standard \lcc)}.
Accompanying Figure~\ref{fig:lcc}, Table~\ref{tab:approx} outlines the best-known approximation results both for maximizing agreements and minimizing disagreements
for the standard \lcc signed graph. For degree-weighted \lcc, the best-known approximation factors for all~$\lambda$ are $O(\log n)$~\cite{Demaine2003,Emanuel2003,charikar2005clustering}
for minimizing disagreements, and~$0.7666$ for maximizing agreements~\cite{swamy2004correlation}. Thus, \lcc is more amenable to approximation than modularity (and relatives) because of additive constants.

\begin{figure}
	\centering
	\includegraphics[width=.65\linewidth]{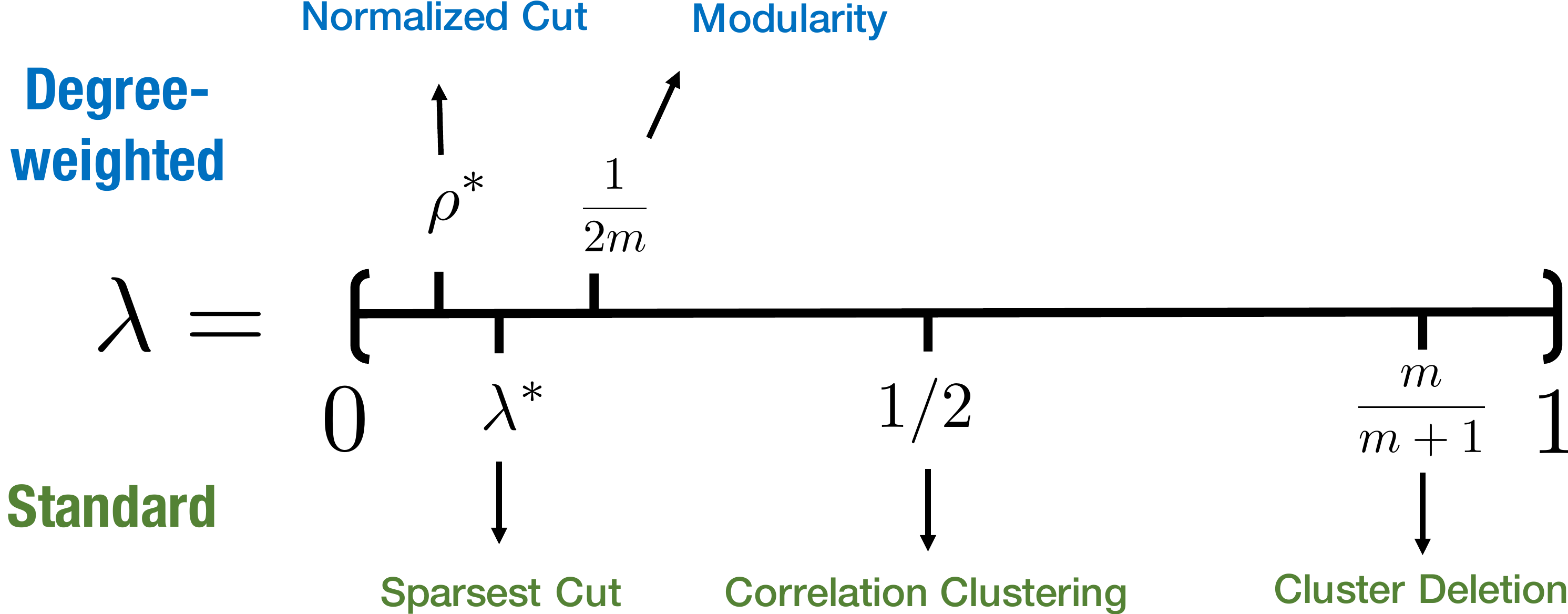}
	\caption{\lcc is equivalent to several other objectives for specific values of $\lambda \in (0,1)$.
	Values~$\lambda^*$ and~$\rho^*$ are not known a priori, but can be obtained by solving \lcc for increasingly smaller values of $\lambda$.}
	\label{fig:lcc}
\end{figure}
\begin{table}
	\centering
	\caption{The best approximation factors known for standard \lcc, for $\lambda \in (0,1)$, both for minimizing disagreements and
maximizing agreements. We contribute two constant-factor approximations for minimizing disagreements when~$\lambda > 1/2$.}
	\begin{tabular}{llll}
		\toprule
		  & $\lambda \in \left(0,{1}/{2}\right)$ & $\lambda = {1}/{2}$ & $\lambda \in \left({1}/{2},1\right)$ \\
		\midrule
		Max-Agree& $0.7666$~\cite{swamy2004correlation}&PTAS~\cite{Bansal2004correlation} & $0.7666$~\cite{swamy2004correlation}\\
		\midrule
		Min-Dis. & $O(\log n)$~\cite{charikar2005clustering,Demaine2003,Emanuel2003} & $2.06$~\cite{chawla2015near} &
$3 \left(2: \lambda > \frac{m}{m+1} \right)$ \\
		\bottomrule
	\end{tabular}
	\label{tab:approx}
\end{table}
\section{Algorithms}
We present several new algorithms for our \lcc framework, beginning with methods based on linear programming relaxations. Our best results are a 3-approximation for \lcc when $\lambda > 1/2$ and a 2-approximation for cluster deletion, which rely on a key theorem of van Zuylen and Willamson~\cite{zuylen2009deterministic}. We also show how to alter the approach of Charikar et al.~\cite{charikar2005clustering} for unweighted correlation clustering in order to obtain a 5-approximation for \lcc when $\lambda > 1/2$ and a related 4-approximation for cluster deletion, with self-contained proofs for these approximation guarantees. Although applying the techniques of van Zuylen and Williamson or Charikar et al.\ lead to different approximation guarantees, it is interesting to note that both approaches yield constant-factor approximations for $\lambda > 1/2$, but fail to yield approximation guarantees for arbitrarily small $\lambda$. This presents an interesting open question of whether a similar approximation guarantee can be obtained for small $\lambda$, or whether the computational complexity of the problem is fundamentally different. 


After presenting our LP-based methods, we also outline several more scalable heuristic techniques for solving our objective in practice.

\subsection{Randomized and Deterministic Pivoting Algorithms}
\begin{algorithm}[tb]
	\caption{\alg{CC-Pivot}}
	\begin{algorithmic}[5]
		\State{\bfseries Input:} Signed graph $G = (V,E^+,E^-)$
		\State {\bfseries Output:} Clustering~$\mathcal{C} = \alg{CC-Pivot}(G)$
		\State Select a pivot node $k \in V$
		\State Form cluster $S = \{v \in V: (k,v) \in E^+ \}$
		\State Output clustering $\mathcal{C} = \{S, \alg{CC-Pivot}(G - S) \}$
	\end{algorithmic}
	\label{alg:piv}
\end{algorithm}
Before presenting our methods we review a simple correlation clustering algorithm and a key theorem of van Zuylen and Williamson~\cite{zuylen2009deterministic}. Algorithm \ref{alg:piv} outlines the \alg{CC-Pivot} method, which selects an unclustered node from a signed graph $G$, clusters it with all of its positive neighbors, and recurses on the remaining unclustered nodes. We use notation $\alg{CC-Pivot}(G)$ to indicate applying this procedure on the graph $G$. If the pivot node is chosen uniformly at random, this corresponds to the 3-approximation for unweighted correlation clustering presented by Ailon et al.~\cite{ailon2008aggregating}. The following theorem shows how to obtain deterministic pivoting algorithms by first solving the LP-relaxation of correlation clustering and selecting pivot nodes more carefully at each step:

\begin{theorem}\label{thm3pt1} {(Theorem 3.1 in \cite{zuylen2009deterministic})}
	Let $G = (V,W^+, W^-)$ be a signed, weighted graph where each pair of nodes $(i,j)$ has positive and negative weights $w_{ij}^+ \in W^+$ and $w_{ij}^- \in W^-$. Given a set of budgets $\{c_{ij} : i \in V, j \in V, i\neq j \}$, and an unweighted graph $\tilde{G} = (V, F^+, F^-)$
	satisfying the following assumptions:
	\begin{enumerate}[label=(\roman*)]
		\item $w_{ij}^- \leq \alpha c_{ij}$ for all $(i,j) \in F^+$ and 
		
		$w_{ij}^+ \leq \alpha c_{ij}$ for all $(i,j) \in F^-$,
		\item $w_{ij}^+ + w_{jk}^+ + w_{ik}^- \leq \alpha \left(   c_{ij} + c_{jk} + c_{ik} \right)$ 
		
		for every triplet $\{i,j,k\}$ in $\tilde{G}$ with $(i,j), (j,k) \in F^+$, $ (i,k)\in F^- $,
	\end{enumerate}
	then applying \alg{CC-Pivot} on $\tilde{G}$ will return a solution that costs  at most $\alpha \sum_{i<j} c_{ij}$ if we choose a pivot $k$ that minimizes:
	\[ \frac{\sum_{(i,j)\in T_k^+(G)} w_{ij}^+ + \sum_{(i,j) \in T_k^-(G)} w_{ij}^-}{\sum_{(i,j) \in T_k^+(G) \cup T_k^-(G)} c_{ij}}. \]
	where
	\[ T_k^+(G) = \{ (i,j) \in F^+  : (k,j) \in F^-, (k,i) \in F^+  \} \]
	\[ T_k^-(G) = \{ (i,j) \in F^-  : (k,j) \in F^+, (k,i) \in F^+  \}. \]
\end{theorem}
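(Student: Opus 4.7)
The plan is an induction on the iterations of \alg{CC-Pivot} run on $\tilde{G}$: I would show that at the iteration with pivot $k^*$, the $G$-cost incurred is at most $\alpha$ times a portion of the budget $\sum_{i<j} c_{ij}$, and that these portions are pairwise disjoint across iterations, so summation yields the global bound $\alpha \sum_{i<j} c_{ij}$. A pair $(i,j)$ has its clustering status finalized at this iteration precisely when at least one of $i,j$ lies in $\{k^*\} \cup \{v : (k^*,v) \in F^+\}$; any pair not in this set still has both endpoints remaining for the recursion and costs nothing yet.

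For each pair decided at this iteration I would split on whether $\{i,j,k^*\}$ is a ``bad triangle'' of $\tilde{G}$, that is, a triangle with two $F^+$-edges and one $F^-$-edge. In the non-bad-triangle case (which occurs exactly when $k^* \in \{i,j\}$, or all three edges of $\{i,j,k^*\}$ lie in $F^+$, or $(i,j)\in F^-$ with exactly one of $(k^*,i),(k^*,j)$ in $F^+$), the pivot groups $i,j$ in agreement with $\tilde{G}$'s sign preference for $(i,j)$, so assumption~(i) directly bounds the pair's incurred $G$-cost by $\alpha c_{ij}$. The bad-triangle pairs decided at this step are by construction exactly those in $T_{k^*}^+ \cup T_{k^*}^-$, and their total cost is $C_{k^*} := \sum_{(i,j)\in T_{k^*}^+} w_{ij}^+ + \sum_{(i,j)\in T_{k^*}^-} w_{ij}^-$.

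To bound $C_{k^*}$, I would run a double-counting argument over candidate pivots. For each bad triangle $\{a,b,c\}$ in the current subgraph with $F^-$-edge $(a,c)$, a short case-check shows it contributes $w_{bc}^+$ to $C_a$ via $T_a^+$, $w_{ac}^-$ to $C_b$ via $T_b^-$, $w_{ab}^+$ to $C_c$ via $T_c^+$, and nothing to any other $C_k$, with analogous contributions $c_{bc}, c_{ac}, c_{ab}$ to $B_a, B_b, B_c$, where $B_k := \sum_{(i,j) \in T_k^+ \cup T_k^-} c_{ij}$. Summing assumption~(ii) over all bad triangles in the current subgraph gives $\sum_k C_k \leq \alpha \sum_k B_k$, so the minimum of $C_k/B_k$ over candidate pivots is at most $\alpha$, and the van Zuylen--Williamson selection rule forces $C_{k^*} \leq \alpha B_{k^*}$.

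Combining the two cases, the total cost charged at this iteration is at most $\alpha$ times the sum of $c_{ij}$ over pairs decided here; since every such pair has an endpoint in $k^*$'s cluster and is removed before the recursive call, these charges are over disjoint pairs across iterations and sum to at most $\sum_{i<j} c_{ij}$. The step I expect to be the main obstacle is the three-way bookkeeping in the averaging argument: I must verify, by enumerating the sign patterns of an unordered triple, that every bad triangle appears exactly once in each of three $C_k$-sums (and never in any other), and that the non-bad cases partition cleanly so that assumption~(i) is applied to each pair at most once globally.
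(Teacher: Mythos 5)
The paper does not actually prove this theorem: it is imported verbatim from van Zuylen and Williamson~\cite{zuylen2009deterministic}, and the text explicitly defers to that work for the full proof. Your reconstruction --- charging each pair at the iteration in which its status is decided, bounding the sign-agreeing pairs directly by assumption~(i), identifying the sign-disagreeing pairs with $T_{k^*}^+\cup T_{k^*}^-$, and bounding their total cost by summing assumption~(ii) over the bad triangles of the current subgraph and averaging over candidate pivots so that the ratio-minimizing rule forces $C_{k^*}\leq \alpha B_{k^*}$ --- is precisely the argument of the cited work (itself modeled on the Ailon et al.\ analysis of \alg{Pivot}), and the case enumeration and three-way triangle bookkeeping you flag as the delicate step check out as you describe.
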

The budgets $c_{ij}$ in the above theorem correspond to the cost $c_{ij} = w_{ij}^+x_{ij} + w_{ij}^-(1-x_{ij})$ of a pair $i,j$ in the LP-relaxation for correlation clustering:
\begin{equation}
\begin{array}{ll} \text{minimize}  & \sum_{i<j} w_{ij}^+ x_{ij} + w_{ij}^- (1-x_{ij})\\ \subjectto  & x_{ij} \leq x_{ik} + x_{jk}  \text{ for all $i,j,k$} \\ & 0 \leq x_{ij} \leq 1 \text{ for all $i,j$.} \end{array}
\label{eq:cclp}
\end{equation}
A full proof of the theorem is included in the original work~\cite{zuylen2009deterministic}. As the authors note, the same approximation result holds in expectation if pivot nodes are chosen uniformly at random.


\subsection{3-Approximation for \lcc}
\begin{algorithm}[tb]
	\caption{\threeLP}
	\begin{algorithmic}[5]
		\State{\bfseries Input:} Signed graph $G' = (V,E^+,E^-)$, $\lambda  \in (0,1)$
		\State {\bfseries Output:} Clustering~$\mathcal{C}$ of~$G'$
		\State Solve the LP-relaxation of ILP~\eqref{lamobjstand}, obtaining \emph{distances}~$(x_{ij})$
		\State Define $\tilde{G} = (V,F^+,F^-)$ where
		\[ F^+ = \{(i,j) : x_{ij} < 1/3 \}, \hspace{.5cm} F^- = \{(i,j) : x_{ij} \geq 1/3 \}  \]
		\State Return \alg{CC-Pivot}($\tilde{G}$).
	\end{algorithmic}
	\label{alg:LP3}
\end{algorithm}
We slightly alter the approach of van Zuylen and Williamson for unweighted correlation clustering~\cite{zuylen2009deterministic} to obtain an approximation algorithm for \lcc when $\lambda > 1/2$. Pseudocode for this method is displayed in Algorithm~\ref{alg:LP3}, which we call \threeLP since it satisfies the following guarantee:
\begin{theorem}
	Algorithm \threeLP satisfies Theorem~\ref{thm3pt1} with $\alpha = 3$ for standard \lcc when $\lambda > 1/2$.
\end{theorem}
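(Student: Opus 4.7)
The plan is to verify the two hypotheses of Theorem~\ref{thm3pt1} with $\alpha=3$, given the weights of standard \lcc and the thresholding at $1/3$ that defines~$\tilde{G}$. For each pair $(i,j)$, exactly one of $w_{ij}^+$ and $w_{ij}^-$ is nonzero: the pair is either a true edge with $w_{ij}^+=1-\lambda$ or a non-edge with $w_{ij}^-=\lambda$. Correspondingly, the LP budget is $c_{ij}=(1-\lambda)x_{ij}$ in the first case and $c_{ij}=\lambda(1-x_{ij})$ in the second.

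Assumption~(i) is a pair-by-pair check. If $(i,j)\in F^+$ then $x_{ij}<1/3$, so for a non-edge we have $c_{ij}=\lambda(1-x_{ij})>2\lambda/3$, giving $3c_{ij}>2\lambda\geq\lambda=w_{ij}^-$; for a true edge $w_{ij}^-=0$, so the bound is trivial. If $(i,j)\in F^-$ then $x_{ij}\geq 1/3$, so for a true edge $3c_{ij}=3(1-\lambda)x_{ij}\geq 1-\lambda=w_{ij}^+$; for a non-edge $w_{ij}^+=0$. The only place $\lambda>1/2$ is implicitly used here is in the first inequality, which actually only needs $\lambda$ bounded away from $0$, so this step is easy.

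Assumption~(ii) is the substantive step. Fix a triplet with $(i,j),(j,k)\in F^+$ and $(i,k)\in F^-$, so $x_{ij},x_{jk}<1/3$ and $1/3\leq x_{ik}\leq x_{ij}+x_{jk}<2/3$, where the middle bound comes from the LP triangle inequality. I would then do a case split on whether each of the three pairs is a true edge or a non-edge in $G$. The trivial bad case that forces $\lambda>1/2$ is when $(i,j),(j,k)\in E^+$ and $(i,k)\in E^-$: the left-hand side is $2(1-\lambda)+\lambda=2-\lambda$, and
\[
c_{ij}+c_{jk}+c_{ik}=(1-\lambda)(x_{ij}+x_{jk})+\lambda(1-x_{ik})\geq (1-\lambda)x_{ik}+\lambda(1-x_{ik})=\lambda+(1-2\lambda)x_{ik}.
\]
When $\lambda\geq 1/2$ the coefficient of $x_{ik}$ is nonpositive, so I plug in the upper bound $x_{ik}\leq 2/3$ to obtain the sum $\geq(2-\lambda)/3$, exactly what is needed after multiplying by $3$. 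The other seven sign patterns are handled by similar but looser estimates: the case where all three pairs are true edges uses $x_{ij}+x_{jk}+x_{ik}\geq 2x_{ik}\geq 2/3$; cases where only one of $(i,j),(j,k)$ is in $E^+$ get a large contribution from the non-edge in $F^+$ via $1-x_{jk}>2/3$; and the cases where the LHS vanishes, such as $(i,j),(j,k)\in E^-$ and $(i,k)\in E^+$, are immediate.

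The main obstacle is the tight case $(a_{ij},a_{jk},a_{ik})=(1,1,0)$ above: it is the only case where both the triangle inequality and the upper bound $x_{ik}\leq 2/3$ are simultaneously needed, and it is precisely this case that breaks for $\lambda\leq 1/2$, since the coefficient $1-2\lambda$ then has the wrong sign to let us substitute $x_{ik}\leq 2/3$. This single inequality therefore explains both the $\alpha=3$ guarantee and the $\lambda>1/2$ hypothesis in the theorem.
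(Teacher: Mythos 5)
Your proposal is correct and follows essentially the same route as the paper's own proof: a pair-by-pair verification of assumption~(i), then an eight-way case split on the sign pattern of a bad triangle for assumption~(ii), with the tight case $(ij^+,jk^+,ik^-)$ handled by exactly the same chain $c_{ij}+c_{jk}+c_{ik}\geq \lambda+(1-2\lambda)x_{ik}\geq(2-\lambda)/3$ using $x_{ik}\leq 2/3$ and the sign of $1-2\lambda$. Your observation that this single case is the sole source of the $\lambda>1/2$ restriction matches the paper's remark as well.
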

\begin{proof}
	The first two inequalities we need to check for Theorem~\ref{thm3pt1} are
	\begin{equation}
	\label{one}
	\text{$w_{ij}^- \leq \alpha c_{ij}$ for all $(i,j) \in F^+$}
	\end{equation}
	\begin{equation}
	\label{two}
	\text{$w_{ij}^+ \leq \alpha c_{ij}$ for all $(i,j) \in F^-$}
	\end{equation}
	We must first understand what these terms mean for our specific problem. Our input graph $G' = (V,E^+,E^-)$ is made up simply of positive edges of weight $(1-\lambda)$ and negative edges with weight $\lambda$. If $(i,j) \in E^+$, the node pair has weights $(w_{ij}^+, w_{ij}^-) = (1-\lambda, 0)$, and LP cost $c_{ij} = (1-\lambda)x_{ij}$ in~\eqref{lamobjstand}. For negative edges $(i,j) \in E^-$, we have $(w_{ij}^+, w_{ij}^-) = (0,\lambda)$, and $c_{ij} = \lambda(1-x_{ij})$. By construction, if $(i,j) \in F^+$, then $x_{ij} < 1/3$, otherwise $(i,j) \in F^-$ and we know $x_{ij} \geq 2/3$. 
	
	Consider an edge $(i,j) \in F^+\cap E^+$. Then $w_{ij}^- = 0$ and inequality~\eqref{one} is trivial since the left hand side is zero. Similarly, inequality~\eqref{two} is trivial if $(i,j) \in F^-\cap E^-$. Assume then that $(i,j) \in F^+\cap E^-$. Then $w_{ij}^- = \lambda$ and $c_{ij} = \lambda (1 - x_{ij})$, and we know $x_{ij} < 1/3 \implies (1-x_{ij}) > 2/3$. Therefore:
	\[ w_{ij}^- = \lambda  < 3\lambda \left(2/3\right) < 3\lambda(1-x_{ij})= \alpha c_{ij}. \]
	On the other hand, if $(i,j) \in F^- \cap E^+$, then $w_{ij}^+ = (1-\lambda)$, $c_{ij} = (1-\lambda) x_{ij}$, and $x_{ij} \geq 1/3$, so we see:
	\[w_{ij}^+ = (1-\lambda) = 3(1-\lambda)\left(1/3 \right) \leq 3 (1-\lambda) x_{ij} = \alpha c_{ij}. \]
	This concludes the proof for inequalities~\eqref{one} and \eqref{two}. Next we consider a triplet of nodes $\{i,j,k\}$ where $(i,j) \in F^+, (j,k) \in F^+$ but $(i,k) \in F^-$. This is called a \emph{bad triangle} since we will have to violate at least one of these edges when clustering $\tilde{G}$. We must show that for $\alpha = 3$
	\begin{equation}
	\label{btin}
	w_{ij}^+ + w_{jk}^+ + w_{ik}^- \leq \alpha \left(   c_{ij} + c_{jk} + c_{ik} \right).
	\end{equation}
	Showing this inequality is somewhat tedious. The variables in \eqref{btin} are highly dependent on the types of edges shared among nodes $\{i,j,k\}$ in the original signed graph $G'$; there are two possibilities for each edge for a total of eight cases. We consider each case in turn. For notational simplicity we will write $ab^+$ if $(a,b) \in E^+$ and $ab^-$ if $(a,b) \in E^-$. We will repeatedly use the triangle inequality constraint satisfied by the variables, and the fact that $x_{ij} < 1/3$, $x_{jk} < 1/3$, and $1/3 \leq x_{ik} \leq x_{ij} + x_{jk} < 2/3$ by our construction of $F^+$ and $F^-$. \newline
	
	\noindent \textbf{Case 1: $(ij^+, jk^+, ik^-)$. }
	For this case $(c_{ij},c_{jk},c_{ik}) = ( (1-\lambda)x_{ij}, (1-\lambda)x_{jk}, \lambda (1-x_{ik}) )$ and $(w_{ij}^+,w_{jk}^+,w_{ik}^-) = (1-\lambda, 1-\lambda, \lambda)$, so
	\begin{align*}
	\alpha (c_{ij} &+ c_{jk} + c_{ik}) = 3 \left((1-\lambda)(x_{ij} + x_{jk}) + \lambda(1-x_{ik} )  \right) \\
	&\geq 3 \left( (1-\lambda)x_{ik} + \lambda(1-x_{ik}) \right) = 3\left( (1-2 \lambda) x_{ik} + \lambda \right) \\
	&> 3\left( (1-2\lambda) 2/3 + \lambda \right) = 2-\lambda = w_{ij}^+ + w_{jk}^+ + w_{ik}^-.
	\end{align*}
	We rely above on the fact that $(1-2\lambda) < 0$, which restricts our proof to cases where $\lambda > 1/2$.\newline
	
	\noindent \textbf{Case 2 $(ij^+, jk^-, ik^-)$; and Case 3: $(ij^-, jk^+, ik^-)$.} 
	
	\noindent For case 2, $(c_{ij},c_{jk},c_{ik}) = ( (1-\lambda)x_{ij}, \lambda (1-x_{jk}), \lambda (1-x_{ik}) )$ and $(w_{ij}^+,w_{jk}^+,w_{ik}^-) = (1-\lambda, 0, \lambda)$.
	Thus,
	\begin{align*}
	\alpha (c_{ij} & + c_{jk} + c_{ik}) = 3 \left( (1-\lambda) x_{ij} +\lambda(1-x_{jk}) + \lambda(1-x_{ik})\right)\\
	& \geq 3 \left( \lambda -\lambda x_{jk} + \lambda - \lambda x_{ik} \right) \geq 3 \left( \lambda - \lambda/3 + \lambda - 2\lambda/3 \right)\\
	&= 3\lambda \geq 1 = w_{ij}^+ + w_{jk}^+ + w_{ik}^-
	\end{align*}
	Case 3 is symmetric: switch the roles of edges $(i,j)$ and $(j,k)$ and the same result holds.\newline
	
	\noindent \textbf{Case 4: $(ij^-, jk^-, ik^-)$. }
	When all edges are negative the bound is loose since the left hand side of inequality~\eqref{btin} is $\lambda$ and we can easily bound the right hand side below:
	\begin{align*}
	\alpha (c_{ij} &+ c_{jk} + c_{ik}) = 3 (\lambda (1-x_{ij}) + \lambda (1-x_{jk}) + \lambda (1-x_{ik}))\\
	& = 3\lambda(3 - x_{ij} - x_{jk} - x_{ik}) > 3 \lambda (3 - 1/3 - 1/3 - 2/3) \\
	& = 3 \lambda ( 5/3) = 5 \lambda > \lambda = 0+0+\lambda = w_{ij}^+ + w_{jk}^+ + w_{ik}^-
	\end{align*}
	
	\noindent \textbf{Case 5: $(ij^+, jk^-, ik^+)$ and Case 6: $(ij^-,jk^+,ik^+)$.}
	A single proof works for both cases, using the fact that $(i,k) \in F^- \implies x_{ik} \geq 1/3$:
	\begin{align*}
	\alpha (c_{ij} &+ c_{jk} + c_{ik}) \geq 3(c_{ik}) = 3 (1-\lambda)x_{ik}\\
	&\geq 3 (1-\lambda) \frac13 = (1-\lambda) = w_{ij}^+ + w_{jk}^+ + w_{ik}^-.
	\end{align*}
	
	\noindent \textbf{Case 7: $(ij^-, jk^-, ik^+)$.}
	This case is trivial since $w_{ij}^+ + w_{jk}^+ + w_{ik}^- = 0$. 
	
	\noindent \textbf{Case 8: $(ij^+, jk^+, ik^+)$.}
	We apply the fact that $1/3 < x_{ik} \leq x_{ij} + x_{jk}$ to see:
	\begin{align*}
	\alpha (c_{ij} &+ c_{jk} + c_{ik})= 3(1-\lambda)(x_{ij} + x_{jk} + x_{ik})\\
	&\geq 3(1-\lambda)(2/3)= 2 (1-\lambda) = w_{ij}^+ + w_{jk}^+ + w_{ik}^-.
	\end{align*}
	For all cases we see that inequality~\eqref{btin} holds, so by Theorem~\ref{thm3pt1}, applying deterministic \alg{CC-Pivot} to $\tilde{G}$ will induce a clustering on $G'$ within a factor 3 of the LP lower bound. For uniformly random pivot nodes the same approximation holds in expectation.
\end{proof}

\subsection{2-Approximation for Cluster Deletion}
If we are explicitly interested in approximating the cluster deletion objective, then we alter \alg{threeLP} in two ways: we add the constraint $x_{ij} = 1$ for $(i,j) \in E^-$ to the LP-relaxation, and then we change how to round the output into an unweighted graph $\tilde{G} = (V,F^+,F^-)$ on which we apply \alg{CC-Pivot}. Pseudocode is shown in Algorithm \ref{alg:twocd}. 
\begin{algorithm}[tb]
	\caption{\twoCD}
	\begin{algorithmic}[5]
		\State{\bfseries Input:} Signed graph $G' = (V,E^+,E^-)$, $\lambda  \in (0,1)$
		\State {\bfseries Output:} Clustering~$\mathcal{C}$ of~$G'$
		\State Solve the LP-relaxation of cluster deletion:
		\begin{equation*}
		\begin{array}{ll} \text{minimize}  & \sum_{(i,j) \in E^+} x_{ij} \\ \subjectto  & x_{ij} \leq x_{ik} + x_{jk}  \text{ for all $i,j,k$} \\ & x_{ij} \in [0,1] \text{ for all (i,j) $\in E^+$}  \\ & x_{ij} = 1 \text{ for all (i,j) $\in E^-$}\end{array}
		\end{equation*}
		\State Define $\tilde{G} = (V,F^+,F^-)$ where
			\[ F^+ = \{(i,j) : x_{ij} < 1/2 \},\hspace{.5cm}   F^- = \{(i,j) : x_{ij} \geq 1/2 \}  \]
		\State Return \alg{CC-Pivot}($\tilde{G}$)
	\end{algorithmic}
	\label{alg:twocd}
\end{algorithm}


We name the resulting procedure \alg{twoCD}, and prove the following result:
\begin{theorem}
	\label{CD2}
	Algorithm \alg{twoCD} satisfies Theorem~\ref{thm3pt1} with $\alpha = 2$.
\end{theorem}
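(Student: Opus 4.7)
The plan is to check each of the three hypotheses of Theorem~\ref{thm3pt1} for cluster deletion, viewed as the correlation-clustering instance in which each $(i,j) \in E^+$ carries weights $(w_{ij}^+, w_{ij}^-) = (1,0)$ and each $(i,j) \in E^-$ carries weights $(0, \infty)$, with the LP constraint $x_{ij} = 1$ on negative edges forcing the LP budget $c_{ij} = w_{ij}^+ x_{ij} + w_{ij}^-(1 - x_{ij})$ to equal $0$ on negative edges and $x_{ij}$ on positive edges. The first observation is structural: the rounding rule $F^- = \{(i,j): x_{ij} \geq 1/2\}$ automatically places every $(i,j) \in E^-$ into $F^-$, since these all have $x_{ij} = 1$; therefore every edge of $F^+$ lies in $E^+$.

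For condition (i), every $(i,j) \in F^+$ has $w_{ij}^- = 0$ by the observation above, so the inequality is trivial. For condition (ii), the only nontrivial case is $(i,j) \in F^- \cap E^+$, where $w_{ij}^+ = 1$, $c_{ij} = x_{ij} \geq 1/2$, and therefore $w_{ij}^+ = 1 \leq 2 \cdot (1/2) \leq 2 c_{ij}$, confirming the bound with $\alpha = 2$.

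The main obstacle is condition (iii), the bad-triangle inequality for triples with $(i,j), (j,k) \in F^+$ and $(i,k) \in F^-$. The key observation is that the LP triangle inequality forces $x_{ik} \leq x_{ij} + x_{jk} < 1/2 + 1/2 = 1$, which \emph{strictly} rules out $(i,k) \in E^-$, since any negative edge has $x_{ik} = 1$. Hence $(i,k) \in E^+$, giving $w_{ik}^- = 0$ and $c_{ik} = x_{ik} \geq 1/2$. The left-hand side of (iii) is then $w_{ij}^+ + w_{jk}^+ + w_{ik}^- = 1 + 1 + 0 = 2$, while applying the LP triangle inequality a second time yields $c_{ij} + c_{jk} + c_{ik} = x_{ij} + x_{jk} + x_{ik} \geq 2 x_{ik} \geq 1$, so the right-hand side is at least $2 \cdot 1 = 2$, as needed.

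With all three hypotheses verified for $\alpha = 2$, Theorem~\ref{thm3pt1} guarantees that \alg{CC-Pivot} on $\tilde{G}$ returns a clustering whose cost is at most $2\sum_{i<j} c_{ij}$, which equals twice the LP optimum and hence at most twice the cluster-deletion optimum; a $2$-approximation follows. The genuinely delicate step is the bad-triangle case: the entire argument hinges on the fact that tightening the LP with $x_{ij} = 1$ on $E^-$ makes the triangle inequality alone incompatible with $(i,k) \in E^-$ in a bad triangle, which eliminates the only configuration that would carry an $\infty$-weight disagreement.
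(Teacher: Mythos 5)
Your proof is correct and follows essentially the same route as the paper's: both verify conditions (i) and (ii) by noting $F^+\subseteq E^+$, and both handle the bad-triangle condition via the key observation that $x_{ik}\le x_{ij}+x_{jk}<1$ combined with the forced constraint $x_{ij}=1$ on $E^-$ makes every bad triangle all-positive in $G'$, whence $2(x_{ij}+x_{jk}+x_{ik})\ge 4x_{ik}\ge 2$. The only cosmetic difference is that you assign weight $\infty$ to negative edges and let clique-ness of the output follow from finiteness of the cost bound, whereas the paper uses weight $1$ and proves separately, up front, that \alg{CC-Pivot} on $\tilde{G}$ never places a negative edge inside a cluster.
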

\begin{proof}
	First observe that no negative edge mistakes are made by performing \alg{CC-Pivot} on $\tilde{G}$: if $k$ is the pivot and $i,j$ are two positive neighbors of $k$ in $\tilde{G}$, then $x_{ik} < 1/2$, $x_{jk} < 1/2$, and $x_{ik} \leq x_{ik} + x_{jk} < 1$. Since all distances are less than one, all nodes must share positive edges in $G'$.
	Now we show that the assumptions of Theorem~\ref{thm3pt1} are satisfied.
	Recall that for cluster deletion the edge weights for $(i,j) \in E^+$ are $(w_{ij}^+,w_{ij}^-) = (1,0)$ and for $(i,j) \in E^-$ we have $(w_{ij}^+,w_{ij}^-) = (0,1)$. The LP budget for $(i,j) \in E^+$ is $x_{ij}$ and is zero otherwise.
	
	For part (1) of Theorem~\ref{thm3pt1}, note that $(i,j) \in F^+ \implies (i,j) \in E^+ \implies w_{ij}^- = 0 \leq 2c_{ij}.$ If $(i,j) \in F^-\cap E^-$, we have $w_{ij}^+ = 0$ and the inequality $w_{ij}^+ \leq \alpha c_{ij}$ is trivial. Finally, if $(i,j) \in F^- \cap E^+$, then $x_{ij} \geq 1/2$, $w_{ij}^+ = 1$, and $c_{ij} = x_{ij}$, so the inequality $w_{ij}^+ = 1 = 2( 1/2) \leq 2 x_{ij} = \alpha c_{ij}$ holds.
	
	For part (2) of Theorem~\ref{thm3pt1}, let $\{i,j,k\}$ be a bad triangle in $\tilde{G} = (V,F^+,F^-)$ where $(i,k) \in F^-$ is the negative edge. The key is to notice that $\{i,j,k \}$ must form a triangle of all positive edges in the original signed $G'$. Since $\{(i,j), (j,k)\} \subset F^+$, we see $x_{ij}<1/2$, $x_{jk}<1/2$, and therefore $x_{ik} \leq x_{ij} + x_{jk} < 1$. Since these distances are strictly less than one, all of the edges are positive in $G'$. This means that $(c_{ij}, c_{jk}, c_{ik}) = (x_{ij}, x_{jk}, x_{ik})$ and $(w_{ij}^+, w_{jk}^+, w_{ik}^-) = (1,1,0)$. Also $(i,k) \in F^-$ implies $x_{ik} \geq 1/2$, and combining these facts with the triangle inequality yields the desired result:
	\begin{align*}
	\alpha \left( c_{ij} + c_{jk} + c_{ik} \right) &= 2\left(x_{ij} + x_{jk} + x_{ik} \right) \geq 2 (2 x_{ik}) \geq 2 = w_{ij}^+ + w_{jk}^+ + w_{ik}^- .
	\end{align*}
	Therefore, Theorem~\ref{thm3pt1} guarantees that \alg{CC-Pivot} will output a clustering that at most costs $2 \sum_{i<j} c_{ij} = 2 \sum_{(i,j) \in E^+} x_{ij}$, so this is a two-approximation for cluster deletion.
\end{proof}
This result is particularly interesting given that no constant-factor for cluster deletion has been explicitly presented in previous literature. In contrast, numerous approximation results have been presented for unweighted correlation clustering, culminating in the 2.06 approximation given by Chawla et al.~\cite{chawla2015near}. Our result indicates for the first time that although far fewer approximation algorithms for cluster deletion have been developed, it is in a sense an easier problem to approximate than correlation clustering. 

\subsection{5-Approximation for \lcc}
For completeness we also include details for a 5-approximation for \lcc when $\lambda >1/2$, which also relies on solving the LP-relaxation~\eqref{eq:cclp}. The rounding scheme and proof technique of this method are similar to those developed by Charikar et al.\ for~$\pm 1$
correlation clustering~\cite{charikar2005clustering}. \newln{The rounding scheme we apply here is in fact the same as the approach developed earlier and independently by Puleo and Milenkovic for a specially weighted version of correlation clustering that can be viewed as a generalization of standard \lcc when $\lambda > 1/2$~\cite{puleo2015correlation}. We give proof details here specifically for \lcc weights.} Pseudocode for the method is given in Algorithm~\ref{alg:LP5}. We refer to this as \fiveLP, based on the following approximation result.
\begin{algorithm}[tb]
	\caption{\alg{fiveLP}}
	\begin{algorithmic}[5]
		\State{\bfseries Input:} Signed graph $G' = (V,E^+,E^-)$, $\lambda  \in (0,1)$
		\State {\bfseries Output:} Clustering~$\mathcal{C}$ of~$G'$
		\State Solve the LP-relaxation of ILPs~(\ref{eq:cc},\ref{lamobjstand}), obtaining \emph{distances}~$(x_{ij})$
		\State $W \gets V$, $\mathcal{C} \gets \varnothing$
		\While{$W\neq \varnothing$ }
		\State Choose $u \in W$ arbitrarily
		\State $T \gets \{i \in W: x_{ui} \leq 2/5 \}$
		\If {average \emph{distance} between~$u$ and~$T$ is $< {1}/{5}$}
		\State $S := \{u\} \cup T$
		\Else 
		\State $S := \{u\}$
		\EndIf 
		\State $\mathcal{C} \gets \mathcal{C} \cup \{S \}$, $W \gets W\backslash S $
		\EndWhile
	\end{algorithmic}
	\label{alg:LP5}
\end{algorithm}
\begin{theorem}
	\label{thm:5lp}
	Algorithm~\fiveLP
	gives a factor-$5$ approximation for \lcc for all $\lambda > {1}/{2}$.
\end{theorem}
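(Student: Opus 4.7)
The plan is to show $\text{ALG} \leq 5 \cdot \text{LP}$ by an iteration-by-iteration charging argument, adapted to the $(1-\lambda,\lambda)$ weights of standard \lcc and crucially exploiting $\lambda > 1/2$ to convert negative-edge LP cost into coverage for positive-edge mistakes. Associate each pair $\{i,j\}$ with the iteration at which the earlier of $i$ and $j$ first joins a cluster; this partitions the LP cost as $\sum_t L_t = \text{LP}$. At iteration $t$ with pivot $u$, cluster $S$, and remaining set $W$, the newly charged mistakes $M_t$ are the negative edges within $S$ (weight $\lambda$) and the positive edges from $S$ into $W \setminus S$ (weight $1-\lambda$). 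It suffices to show $M_t \leq 5 L_t$ at every iteration.

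In Case 1 ($S = \{u\} \cup T$, average distance $< 1/5$), a negative mistake within $S$ has $x_{ij} \leq x_{ui} + x_{uj} \leq 4/5$ by the triangle inequality, so $c_{ij} = \lambda(1-x_{ij}) \geq \lambda/5$ and the ratio is $5$ edge by edge. A positive cut edge $(u,j)$ has $x_{uj} > 2/5$, giving $c_{uj} > 2(1-\lambda)/5$ and ratio $5/2$. The delicate mistakes are positive cut edges $(i,j)$ with $i \in T$ and $j \in W \setminus S$: individual $x_{ij}$ may be nearly zero, so I split on $x_{uj}$. When $x_{uj} > 3/5$, the triangle inequality $x_{ij} \geq x_{uj} - x_{ui} > 1/5$ resolves each such edge individually. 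When $2/5 < x_{uj} \leq 3/5$, both triangle bounds $x_{ij} \geq x_{uj} - x_{ui}$ (for positive edges) and $1 - x_{ij} \geq 1 - x_{ui} - x_{uj}$ (for negative edges) are nonnegative for $i \in T$, and the argument must pool them.

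For pooling, define $f_{ij} := x_{ij}$ on $E^+$ and $f_{ij} := 1 - x_{ij}$ on $E^-$, so $c_{ij} \geq (1-\lambda) f_{ij}$ (the loss $\lambda/(1-\lambda)$ on negative edges is absorbed by $\lambda > 1-\lambda$). Summing the triangle bounds over $i \in T$ for fixed $j$ and applying $\sum_{i \in T} x_{ui} < |T|/5$ gives
\begin{equation*}
\sum_{i \in T} f_{ij} > p_j(x_{uj} - 1/5) + n_j(4/5 - x_{uj}) \geq p_j/5,
\end{equation*}
where $p_j$ and $n_j$ count the positive and negative edges from $T$ to $j$ and both terms on the right are nonnegative throughout $2/5 < x_{uj} \leq 3/5$. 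Multiplying by $1-\lambda$ yields the block inequality $(1-\lambda) p_j \leq 5 \sum_{i \in T} c_{ij}$, as needed. Case 2 ($S = \{u\}$) is handled by the same pooling, now with pivot-to-$T$ edges: using $\sum_{i \in T} x_{ui} \geq |T|/5$ and $x_{ui} \leq 2/5$ one obtains $\sum_{i \in T} f_{ui} \geq (p + 2n)/5 \geq p/5$, while mistakes to $v \in W \setminus T$ are handled edge by edge via $x_{uv} > 2/5$.

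The main obstacle is the pooled step in the subcase $2/5 < x_{uj} \leq 3/5$, where every individual positive cut edge from $T$ to $j$ may contribute negligible LP cost. The resolution combines positive- and negative-edge triangle slack into the single quantity $\sum_{i \in T} f_{ij}$ and invokes $\lambda > 1/2$ to remove the weight gap. The radii $2/5$ and $1/5$ are fine-tuned so that the two nonnegative pieces $p_j(x_{uj}-1/5)$ and $n_j(4/5-x_{uj})$ each furnish the necessary $1/5$ factor, closing the approximation at exactly $5$.
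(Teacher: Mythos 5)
Your proposal is correct and follows essentially the same argument as the paper's proof: the same per-cluster charging of LP cost, the same radii $2/5$ and $1/5$, the same split on $x_{uj}\geq 3/5$ versus $x_{uj}\in(2/5,3/5)$, the same pooled bound $p_j(x_{uj}-1/5)+n_j(4/5-x_{uj})$, and the same use of $\lambda>1/2$ to absorb the positive/negative weight gap (the paper phrases this via the rescaled weights $(1,\alpha)$ with $\alpha=\lambda/(1-\lambda)>1$ rather than your $f_{ij}$ normalization). The only cosmetic differences are your explicit iteration-indexed decomposition of the LP cost and your explicit handling of cut edges $(u,v)$ with $v\notin T$ in the singleton case, which the paper covers implicitly.
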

\begin{proof}
We prove the 5-approximation holds for objective~\eqref{alpobj} when $\alpha = \lambda/(1-\lambda)$, since this objective is equivalent to \lcc in terms of approximations. In other words,
we are considering the relaxation of a correlation clustering problem where each positive edge has weight~$1$ and each negative
edge has weight $\alpha > 1$.
Solving this LP gives a lower bound on the optimal \lcc score. We show that both for singleton and non-singleton clusters formed by \fiveLP, the
number of mistakes made at each cluster is within a factor five of the LP cost corresponding to that cluster. Recall that each cluster is formed around some node~$u$, and $T = \{i \in W: x_{ui} \leq 2/5 \}$.

\paragraph{Singleton Clusters.}
If~$u$ is a singleton, we know that~$\sum_{i\in T} x_{ui} \geq {|T|}/5$. In this case we make at most~$|T|$ mistakes,
which would happen if all edges between~$u$ and~$T$ are positive. Given that $x_{ui} \leq 2/5$ for every~$i \in T$, we know that $(1-x_{ui}) \geq {3}/{5} \geq x_{ui}$.
Let $T^+ = \{i \in T: (u,i) \in E^+ \}$ and $T^- = \{i \in T: (u,i) \in E^- \}$. The LP cost associated with this cluster is
\[ \sum_{i\in T^+} x_{ui} + \alpha \sum_{i\in T^-} (1-x_{ui}) > \sum_{i\in T^+} x_{ui} + \sum_{i\in T^-} x_{ui}
\geq \sum_{i\in T} x_{ui}  \geq \frac{|T|}{5}\,, \]
so we account for the errors within a factor~$5$.

\paragraph{Negative-edge mistakes in non-singleton clusters.}
Consider a negative edge inside a cluster of the form $S = \{u\} \cup T$. If that edge is $(u,j)$, the LP cost is $1-x_{uj} \geq 1- 2/5 = 3/5$. For
every other negative edge, $(i,j)$, where $u \notin \{i,j\}$,
the LP cost is $1-x_{ij} \geq 1 - x_{ui} - x_{uj}\geq 1- 4/5 = 1/5$. Either way, the LP has paid at least~$1/5$ for each negative-edge mistake.

\paragraph{Positive-edge mistakes in non-singleton clusters.}
Positive edges from~$u$ to~$j \notin T$ satisfy $x_{uj} > 2/5$, so this type of edge pays for itself easily.
The other edges we need to account for are all edges $(i,j) \in E^+$ where~$i \in T$ and~$j \notin T$.
We will charge all edges of this form to the node~$j$ that lies outside~$T$. 

First, if~$x_{uj} \geq 3/5$, then
$x_{ij} \geq x_{uj} - x_{ui} \geq 3/5 - 2/5 = 1/5$ and
the positive edge pays for itself within factor~$5$.

Now, fix some $j \notin T$ where $x_{uj} \in \left( {2}/{5}, {3}/{5} \right)$.
Let $T^+_j = \{i \in T: (i,j) \in E^+ \}$ and $T^-_j = \{i \in T: (i,j) \in E^- \}$, and let $p_j = |T^+_j|$
be the number of positive~$(T,j)$ edges, while $n_j = |T^-_j|$ is the number of negative~$(T,j)$ edges.
The number of positive mistakes we are charging to~$j$ is exactly~$p_j$, and we have
\begin{align*}
\text{LP cost at $j$} &= {\textstyle {\sum_{i \in T^+_j} }x_{ij} + \alpha { \sum_{i \in T^-_j} }(1-x_{ij}) }\\
&\geq {\textstyle \sum_{i \in T^+_j} (x_{uj} - x_{ui}) + \sum_{i \in T^-_j} (1-x_{uj}-x_{ui}) }\\
&= {\textstyle p_j x_{uj} + n_j(1-x_{uj}) - \sum_{i\in T} x_{ui} }\\
&> {\textstyle p_j x_{uj} + n_j(1-x_{uj}) - ({p_j + n_j})/{5}\,, }
\end{align*}
where the last inequality follows from the fact that the average distance from~$u$ to~$T$ is less than~$1/5$.
Thus the LP cost is bounded by a linear function, $p_j(x_{uj}-1/5) + n_j(4/5-x_{uj})$,
where $x_{uj} \in \left( {2}/{5}, {3}/{5} \right)$.
Hence the coefficient of~$p_j \geq 1/5$, while the coefficient of~$n_j \geq 0$,
so~$p_j$ is within a factor~$5$ of the LP cost. 
\end{proof}


\subsection{4-approximation for Cluster Deletion.} We slightly alter \fiveLP in the following ways whenever~$\lambda > m/(m+1)$:
\begin{itemize}
	\item For all $(i,j) \notin E$, force constraints~$x_{ij} = 1$ in the LP.
	\item When rounding, select arbitrary $u \in W$ and set $T \gets \{i \in W: x_{ui} < 1/2 \}$ (rather than using $x_{ui} \leq 2/5$).
	\item Make $u$ a singleton if the average distance from $T$ to $u$ is $\leq 1/4$, otherwise cluster $u$ with $T$.
\end{itemize}
\newln{Charikar et al.\ have already noted in previous work that this algorithm will produce a 4-approximation for cluster deletion~\cite{charikar2005clustering}. We call this algorithm~\alg{fourCD}, and for completeness include a proof of the approximation guarantee.}
\begin{theorem}
	Algorithm \alg{fourCD} returns a 4-approximation to cluster deletion.
\end{theorem}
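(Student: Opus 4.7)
The plan is to adapt the analysis of Theorem~\ref{thm:5lp} to the tighter rounding thresholds ($1/2$ replacing $2/5$, $1/4$ replacing $1/5$) while exploiting the LP constraint $x_{ij}=1$ at each non-edge $(i,j)\notin E$. The structure of the argument (pivot-by-pivot charging, split into singleton and non-singleton cases) should be identical; the tighter constant comes from the extra information encoded in the forced non-edge constraints.

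The first step I would carry out is a ``clique observation'': for any $i, i' \in \{u\} \cup T$ formed at some iteration, triangle inequality gives $x_{ii'} \leq x_{ui}+x_{ui'}<1$, and the non-edge constraint then forces $(i,i')\in E^+$. Hence every cluster produced is automatically a clique in $G$, so no non-edge mistakes ever occur, and the LP objective $\sum_{(i,j)\in E^+} x_{ij}$ is a valid lower bound on the cluster deletion cost. Only positive-edge mistakes need to be charged.

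Next I would split into the two cases at each pivot $u$. In the singleton case (average $u$-to-$T$ distance at least $1/4$), the $|T|$ positive mistakes $(u,i)$ for $i\in T$ are paid by $\sum_{i\in T} x_{ui} \geq |T|/4$, and any positive $(u,j)$ mistake with $j\notin T$ is paid by $x_{uj}\geq 1/2$. In the non-singleton case, only positive edges leaving $\{u\}\cup T$ are mistakes; I pay each $(u,j)\in E^+$ with $j\notin T$ directly via $x_{uj}\geq 1/2$, and charge each positive $(i,j)$ with $i\in T$, $j\notin T$ to $j$. If $x_{uj}\geq 3/4$, then $x_{ij} \geq x_{uj}-x_{ui} > 1/4$ suffices; the range $x_{uj}\in[1/2,3/4)$ is where the real work lies.

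The averaging step in that last subcase is the main obstacle. Let $p_j,n_j$ denote the numbers of positive edges and non-edges from $T$ to $j$, so $p_j+n_j=|T|$ and $\sum_{i\in T} x_{ui}\leq |T|/4$ by the non-singleton branching rule. The new ingredient absent from the \fiveLP proof is that each non-edge $(i,j)$ with $i\in T$ forces $x_{ui}\geq 1-x_{uj}$ through the combination $x_{ij}=1$ and triangle inequality. Combining this with $x_{ij}\geq x_{uj}-x_{ui}$ summed over positive $(T,j)$ edges, the LP cost at $j$ comes out at least $p_j(x_{uj}-1/4)+n_j(3/4-x_{uj})$, which exceeds $p_j/4$ iff $p_j(x_{uj}-1/2)+n_j(3/4-x_{uj})\geq 0$. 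Both summands are nonnegative throughout $x_{uj}\in[1/2,3/4]$, completing the factor-$4$ charge. Neither the average-distance bound nor the forced non-edge bound alone is tight enough; it is precisely their interaction that shrinks the approximation factor from $5$ to $4$.
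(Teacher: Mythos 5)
Your proposal is correct and follows essentially the same route as the paper's proof: the clique observation via $x_{ij}\le x_{ui}+x_{uj}<1$ plus the forced non-edge constraints, the singleton/non-singleton case split with the $|T|/4$ averaging bound, and the final linear charge $p_j(x_{uj}-\tfrac14)+n_j(\tfrac34-x_{uj})\ge p_j/4$ for $x_{uj}\in[\tfrac12,\tfrac34)$. The only cosmetic difference is that you invoke $x_{ui}\ge 1-x_{uj}$ for non-edges $(i,j)$ directly, whereas the paper adds the identically zero term $\sum_{i\in T^-_j}(1-x_{ij})$ and lower-bounds it by $\sum_{i\in T^-_j}(1-x_{uj}-x_{ui})$ --- these are the same inequality.
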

\begin{proof}
First, \alg{fourCD} forms only cliques. Should the cluster formed around~$u$ not be a singleton,
for every $i,j \in T$, with~$i\neq j$, we know $x_{ui},x_{uj} < {1}/{2}$, so
$x_{ij} \leq x_{ui} + x_{uj} < {1}/{2} + {1}/{2} = 1$.
Since the distance $x_{ij}$ is strictly less than~$1$, nodes $i,j$ must be adjacent, or else we would have forced
$x_{ij} = 1$ in the LP-relaxation. 
We therefore only need to account for positive-edge mistakes.
The remainder of the proof follows directly from the same steps used to prove Theorem~\ref{thm:5lp},
as well as the original proof of Charikar et al.~\cite{charikar2005clustering} for~$\pm 1$ correlation clustering:
\paragraph{Singleton Clusters}
If we cluster~$u$ as a singleton, then the number of mistakes we make between~$u$ and~$T$ is exactly~$|T|$, as these are all positive neighbors of~$u$.
Since~$u$ was made a singleton cluster, we know that $\sum_{i\in T} x_{ui} \geq |T|/4$, so these positive mistakes are paid within factor four.
Finally, note that every positive edge $(u,j) \in E^+$ for $j \notin T$ has LP cost greater than~$1/2$, so those mistakes are paid for within factor
two.

\paragraph{Clusters $S = \{u\}\cup T$}
No negative mistakes are made, so we only need to account for positive mistakes.
As mentioned in the previous case, edges $(u,j) \in E^+$ for $j \notin T$ pay for themselves within factor two.
For $j \notin T$ where $x_{uj} \geq 3/4$, if $(i,j) \in E^+$ and $i \in T$, we know $x_{ij} \geq x_{uj} - x_{ui} > 3/4 - 1/2 = 1/4$, so the edge
pays for itself within factor four.

Now consider a single node $j \notin T$ such that $x_{uj} \in [1/2,3/4)$, and then consider all $i \in T$ with $(i,j) \in E^+$.
Again, use the notation $T^+_j = \{i \in T: (i,j) \in E^+ \}$ and $T^-_j = \{i \in T: (i,j) \in E^- \},$ with $p_j = |T^+_j|$ and $n_j = |T^-_j|$.
Now we bound the weight of positive mistakes as a function of the LP cost associated with~$j$.
Thanks to the constraint $x_{ij} = 1$ for all $\nij$, $\sum_{i \in T^-_j} (1-x_{ij}) = 0$, therefore, relying also on the reasoning for~\fiveLP:
\begin{align*}
\sum_{i \in T^+_j} x_{ij} &= \sum_{i \in T^+_j} x_{ij} + \sum_{i \in T^-_j} (1-x_{ij}) \\
&\geq p_j x_{uj} + n_j(1-x_{uj}) - \sum_{i\in T} x_{ui} \\
&\geq p_j x_{uj} + n_j(1-x_{uj}) - \frac{p_j + n_j}{4} \\
&=p_j\left(x_{uj} - \frac 1 4\right) + n_j\left(\frac 3 4 - x_{uj}\right)\,.
\end{align*}
For $x_{uj} \in [1/2,3/4)$,
the coefficient of~$p_j \geq 1/4$, while the coefficient of~$n_j \geq 0$.
Therefore the number of mistakes,~$p_j$, is paid for within factor four.
\end{proof}	

\subsection{Scalable Heuristic Algorithms}
As a counterpart to the previous approximation-driven approaches,
we provide fast algorithms for \lcc-based greedy local heuristics.
The first of these is \alg{GrowCluster} (Algorithm~\ref{alg:grow}), which iteratively selects an
unclustered node uniformly at random and forms a cluster around it by greedily aggregating adjacent nodes, until there is no more improvement to the \lcc objective.

A variant of this, called \alg{GrowClique} (Algorithm~\ref{alg:gclique}), is specifically designed for cluster deletion.
It monotonically improves the \lcc objective, but differs in that at each iteration
it randomly selects~$k$ unclustered nodes, and greedily grows cliques around each of these seeds.
The resulting cliques may overlap: at each iteration we select only the largest of such cliques.

\begin{algorithm}[tb]
	\caption{\alg{GrowCluster}}
	\label{alg:grow}
	\begin{algorithmic}[5]
		\State{\bfseries Input:} $G' = (V,E^+,E^-)$
		\State {\bfseries Output:} a clustering~$\mathcal{C}$ of~$G'$
		\State $W \leftarrow V$, $C \leftarrow \emptyset$
		\While {$W\neq \emptyset$ }
		\State 1. Choose a uniformly random $u \in W$, set $S \leftarrow \{u\}$
		\State 2. For all $v \in W\backslash S$, compute benefit from merging $v$ into $S$:
		\State \hspace{1cm} (For standard \lcc: $\Delta_v = \cut(S,\{v\}) - \lambda |S|$)
		\State  \hspace{1cm} (For degree-weighted: $\Delta_v = \cut(S,\{v\}) - \lambda w_v \vol(S)$
		\State 3. Set $m \leftarrow \max_v \Delta_v$, $v' \leftarrow \arg \max \Delta_v$
		\While {$m>0$}
		\State $S \leftarrow S\cup \{v'\}$
		\State Update $\Delta_v$,~$m$, and~$v'$
		\EndWhile
		\State Add cluster~$S$ to~$\mathcal{C}$, update $W \leftarrow W\backslash S$
		\EndWhile
	\end{algorithmic}
\end{algorithm}

\begin{algorithm}[tb]
	\caption{\alg{GrowClique}}
	\label{alg:gclique}
	\begin{algorithmic}[5]
		\State{\bfseries Input:} $G' = (V,E^+,E^-)$
		\State {\bfseries Output:} a clustering~$\mathcal{C}$ of~$G'$ where all clusters are cliques
		\State $W \gets V$, $C \gets \emptyset$
		\While {$W\neq \emptyset$ }
		\For{$i = 1$ to $k$}
		\State Select a random seed node $u \in W$, set $S \gets \{u\}$
		\State Set $N \gets \{v \in W\backslash S : \mbox{$v$ neighbors all nodes in S}\}$
		\While{ $N \neq \emptyset$ }
		\State $S \gets S \cup \{v\}$ for any $v \in N$
		\State $N \gets \{v \in W\backslash S : \mbox{$v$ neighbors all nodes in S}\}$
		\EndWhile
		\State $S_i \gets S$
		\EndFor
		\State $S_{\max} = \argmax_i |S_i|$
		\State Add cluster~$S_{\max}$ to~$\mathcal{C}$, update $W \gets W\backslash S_{\max}$
		\EndWhile
	\end{algorithmic}
\end{algorithm}

Finally, since the \lcc and Hamiltonian objectives are equivalent,  we can use previously developed
algorithms and software for modularity-like objectives with a resolution parameter. In particular we employ
adaptations of the \emph{Louvain} method, an algorithm developed by Blondel et al.~\cite{blondel2008louvain}.
It iteratively
visits each node in the graph and moves it to an adjacent cluster, if such a move gives a locally maximum increase in
the modularity score.
This continues until no move increases modularity, at which point the clusters are aggregated into
super-nodes and the entire process is repeated on the aggregated network. By adapting the original Louvain method to make
greedy local moves based on the \lcc objective, rather than modularity, we obtain a scalable algorithm that is known to
provide good approximations for a related objective, and additionally adapts well to changes in our parameter~$\lambda$.
We refer to this as \alg{Lambda-Louvain}.
Both standard and degree-weighted versions of the algorithm can be achieved by
employing existing generalized Louvain algorithms (e.g., the GenLouvain algorithm of Jeub et al. \url{http://netwiki.amath.unc.edu/GenLouvain/}).

{Regarding the scalability of these algorithms, we note that in practice for all of these methods we do not explicitly form the signed graph $G'$, which in theory has $O(n^2)$ (positive or negative) edges. Given an initial sparse graph $G$, it suffices to store positive-edge relationships between nodes, and implicitly apply penalties due to negative edges in $G'$ by considering non-edges in $G$}. 
Our heuristic algorithms satisfy the following guarantee:
\begin{theorem}
	\label{thm:sscbound}
	For every~$\lambda$, standard (respectively, degree-weighted) \alg{Lambda-Louvain} either places all nodes in one cluster, or produce clusters that have scaled sparsest cut (respectively, scaled normalized cut) bounded above by~$\lambda$. The same holds true for \alg{GrowCluster}.
\end{theorem}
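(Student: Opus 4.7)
My plan is to reduce both algorithms to a single pairwise cluster inequality and then recycle the summation step from the proof of Theorem~\ref{thm:lambound}(a). Concretely, I will show that whenever the algorithm outputs at least two clusters, any pair of output clusters $S_i, S_j$ satisfies
\[ \cut(S_i, S_j) \leq \lambda |S_i| |S_j| \]
in the standard case (and the analogous bound with $|\cdot|$ replaced by $\vol(\cdot)$ in the degree-weighted case). Fixing $i$ and summing over $j \neq i$ then immediately yields $\cut(S_i, \bar{S_i}) \leq \lambda |S_i| |\bar{S_i}|$, which is the desired scaled sparsest cut (respectively, scaled normalized cut) bound.

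For \alg{Lambda-Louvain}, I would invoke the termination condition directly. When the algorithm terminates, the final local-move phase at the topmost aggregated level has made no improving moves. Right after the previous aggregation every super-node at this level (each representing one final cluster $S_i$) sits in its own singleton super-cluster, so ``no improving move'' means no super-node can be relocated into another super-node's cluster without worsening the \lcc objective. Relocating the super-node for $S_i$ into the super-cluster of $S_j$ corresponds exactly to merging $S_i$ and $S_j$ in the original graph. The same bookkeeping used in Theorem~\ref{thm:lambound}(a) shows that this merger changes the objective by $\lambda|S_i||S_j| - \cut(S_i,S_j)$ in the standard case (or by $\lambda\vol(S_i)\vol(S_j) - \cut(S_i,S_j)$ in the degree-weighted case), and non-negativity of this change gives the claim.

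For \alg{GrowCluster} the argument is asymmetric, because clusters are finalized sequentially. Suppose they are produced in order $S_1, S_2, \ldots, S_k$ and fix $i < j$. At the moment \alg{GrowCluster} stopped augmenting $S_i$, the remaining unclustered set $W$ still contained every node of $S_{i+1}, \ldots, S_k$, and in particular every $v \in S_j$. The stopping condition $\Delta_v = \cut(S_i,\{v\}) - \lambda|S_i| \leq 0$ therefore holds for every $v \in S_j$; summing over $v \in S_j$ gives $\cut(S_i, S_j) \leq \lambda|S_i||S_j|$. The degree-weighted analogue follows by the same sum using $\Delta_v = \cut(S,\{v\}) - \lambda w_v \vol(S)$, yielding $\cut(S_i, S_j) \leq \lambda \vol(S_i)\vol(S_j)$. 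I expect the main subtlety to be precisely this time-asymmetry: one obtains each pairwise inequality by applying the stopping condition for the \emph{earlier-formed} cluster, where the later cluster's nodes were still candidates for addition, rather than trying to derive a symmetric termination condition.

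The single-cluster output case is explicitly allowed by the theorem, so there is nothing to prove there. In every other case the pairwise inequality above holds for all pairs of output clusters, and summing over $j \neq i$ exactly as in the final lines of Theorem~\ref{thm:lambound}(a) completes the argument in all four settings (standard or degree-weighted, \alg{Lambda-Louvain} or \alg{GrowCluster}).
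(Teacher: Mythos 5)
Your proposal is correct and follows essentially the same route as the paper: reduce both algorithms to the pairwise inequality $\cut(S_i,S_j) \leq \lambda |S_i||S_j|$ (or its volume-weighted analogue), obtained from the no-improving-merge termination condition for \alg{Lambda-Louvain} and from the stopping condition of the earlier-formed cluster for \alg{GrowCluster}, and then sum over $j \neq i$ as in Theorem~\ref{thm:lambound}(a). The time-asymmetry you flag for \alg{GrowCluster} is exactly the point the paper's proof also handles, by observing that all nodes of later clusters are still unclustered when an earlier cluster stops growing.
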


\begin{proof}
	Note that by design, when \alg{Lambda-Louvain} terminates there will be no two clusters which can be merged to yield a better objective score. Just as in the proof of  statement (1) for Theorem~\ref{thm:lambound}, for the standard \lcc objective this means that for any pair of cluster $S_i$ and $S_j$ we have
	\begin{equation}
	\label{true}
	 \cut(S_i,S_j) - \lambda|S_i||S_j| \leq 0\,. 
	\end{equation}
	We then fix~$S_i$, perform a sum over all other clusters, and get the desired result:
		\[ { \sum_{j \neq i}} \cut(S_i,S_j) - { \sum_{j \neq i}} \lambda |S_i||S_j| \leq 0\, \implies \cut(S_i, \bar{S_i}) - \lambda |S_i||\bar{S_i}| \leq 0 \implies \frac{\cut(S_i, \bar{S_i})}{|S_i||\bar{S_i}|}
		\leq \lambda\, . \]
	If we are using degree-weighted \alg{Lambda-Louvain}, when the algorithm terminates we know that all pairs of clusters $S_i, S_j$ satisfy
		\begin{equation*}
		\cut(S_i,S_j) - \lambda\vol(S_i)\vol(S_j) \leq 0\,
		\end{equation*}
	and the corresponding result for scaled normalized cut holds.
	
	Though slightly less obvious, it is also true that none of the output clusters of \alg{GrowCluster} (if there are at least two) could be merged to yield a better objective score. Notice that this is certainly true of the first cluster $S_1$ formed by \alg{GrowCluster}: we stop growing $S_1$ when we find that (for standard-weighted \lcc)
	\[\cut(S_1, v) - \lambda |S_1| \leq 0\]
	for all other nodes $v$ in the graph. Therefore, given \emph{any} other subset of nodes $S$ (including sets of nodes making up other clusters that the algorithm will output), we see
	\[ \sum_{v \in S} \left( \cut(S_1, v) - \lambda |S_1| \right)= \cut(S_1,S) - \lambda |S_1| |S| \leq 0. \]
	Therefore when we form the second cluster $S_2$ with \alg{GrowCluster}, we already know that $\cut(S_1,S_2) - \lambda |S_1| |S_2| \leq 0$, and similar reasoning shows that $\cut(S_2, S_j) - \lambda |S_2| |S_j| \leq 0$ will hold for any cluster $S_j$ with $j >2$ that will be subsequently formed. In this way we see that inequality~\eqref{true} will also hold between all pairs of clusters output by \alg{GrowCluster}, so the rest of the result follows. The same steps will also work for degree-weighted \lcc.
\end{proof}

\section{Experiments}
We begin by comparing our new methods against existing correlation clustering algorithms on several small networks. This shows our algorithms for \lcc are superior to common alternatives.
We then study how well-known graph partitioning algorithms implicitly optimize the \lcc objective for various~$\lambda$.  In subsequent experiments,
we apply our methods to clique detection in collaboration and gene networks, and to social network analysis. 


\subsection{\lcc on Small Networks}
In our first experiment, we show that \alg{Lambda-Louvain} is the best general-purpose correlation clustering method for minimizing the \lcc objective. We test this on four small networks: Karate~\cite{karate}, Les Mis~\cite{lesmis},
Polbooks~\cite{polbooks}, and Football~\cite{girvan2002community}.
Figure~\ref{smallnets} shows the performance of our algorithms, \alg{Pivot}, and~\alg{ICM}, for a range of~$\lambda$
values. \alg{Pivot} is the fast algorithm of Ailon et al.~\cite{ailon2008aggregating}, which selects a uniform random node and clusters its neighbors
with it.
\alg{ICM} is the energy-minimization heuristic algorithm of Bagon and Galun~\cite{bagon2011large}. 

\begin{figure}[t]
	\centering
	\subfloat[Karate, $n = 34$, $\lambda^* = .0267$\label{fig:karate}]
	{\includegraphics[width=.35\linewidth]{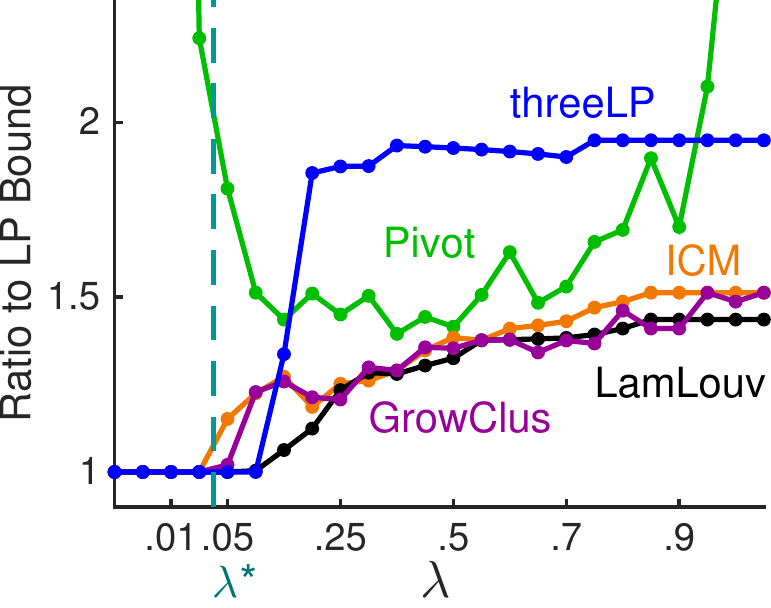}}\hspace{2cm}
	\subfloat[Les Mis, $n = 77$, $\lambda^* = .0045$ \label{fig:lesmis}]
	{\includegraphics[width=.35\linewidth]{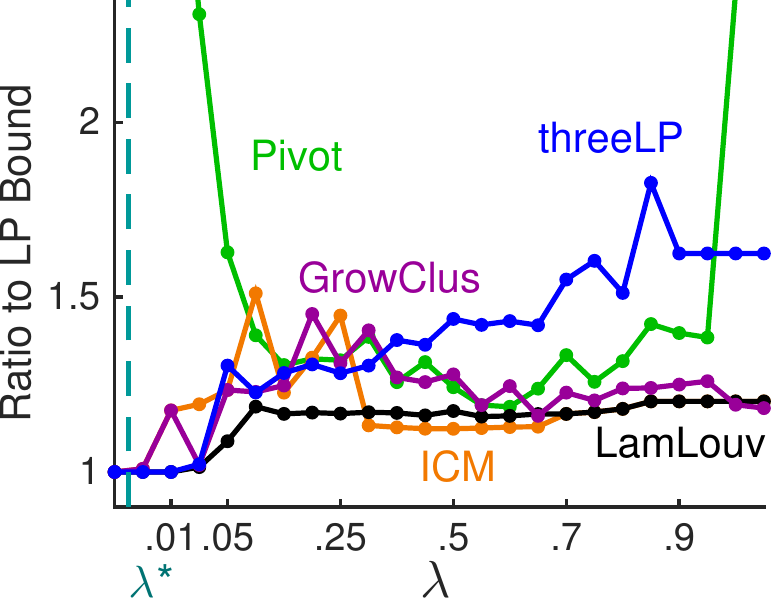}}\hspace{2cm}
	\centering
	\subfloat[Polbooks, $n = 105$, $\lambda^* = .0069$\label{polbooks}]
	{\includegraphics[width=.35\linewidth]{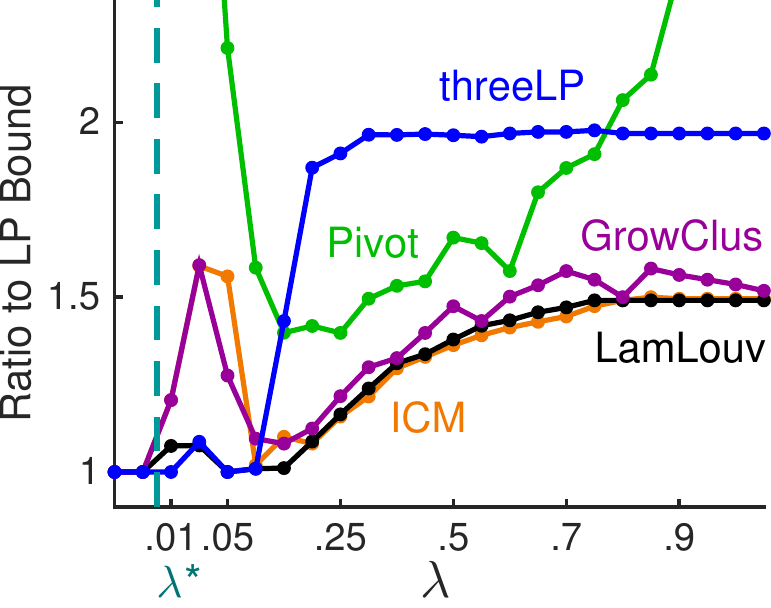}}\hspace{2cm}
	\subfloat[Football, $n = 115$, $\lambda^* = .0184$\label{fig:football}]
	{\includegraphics[width=.35\linewidth]{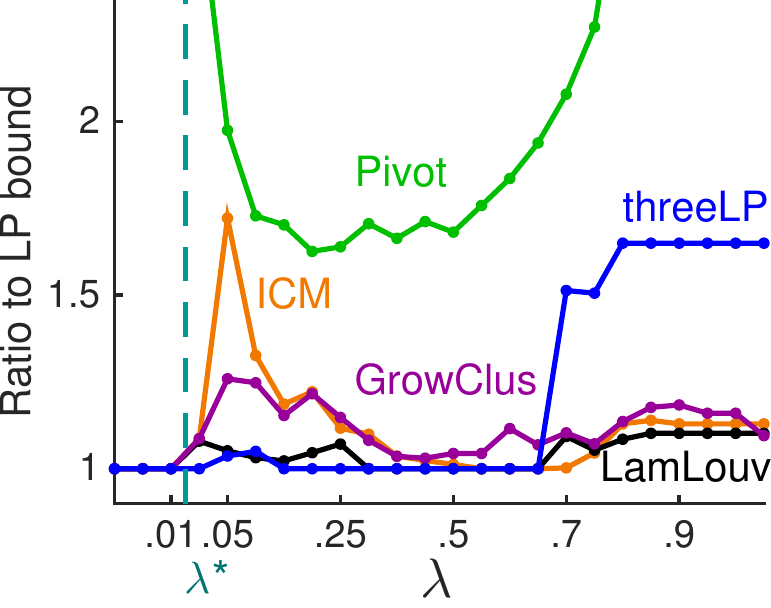}}
	\caption{We optimize the standard \lcc objective with five correlation clustering algorithms on four small networks. The~$y$-axis reports the ratio between each algorithm's score and the lower bound on the optimal objective determined by solving the LP relaxation. 
		\threeLP (blue) gives much better than a factor-$3$ approximation in practice, and performs especially well for small values of~$\lambda$. \alg{ICM} (orange) is faster and gives good approximations. \alg{Pivot} (green) is fast, but does very poorly for extreme values of~$\lambda$ (near either~$0$ or~$1$). 
		\alg{Lambda-Louvain} (black) and \alg{GrowCluster} (violet) perform well for all~$\lambda$ in addition to being the most scalable algorithms.
		In each plot, a dashed vertical line indicates the optimal scaled sparsest cut value,~$\lambda^*$, for that network.}
	\label{smallnets}
\end{figure}

%
We find that~\fiveLP gives much better than a $5$-approximation in practice. \alg{Pivot} is much faster, but performs poorly for~$\lambda$ close to zero or one. \alg{ICM} is also much quicker than solving the LP relaxation, but is still limited in scalability as it is intended for correlation clustering problems where most edge weights are zero, which is not the case for \lccb. On the other hand, \alg{GrowCluster} and \alg{Lambda-Louvain} are scalable and give good approximations for all input networks and values of~$\lambda$.

\subsection{Standard Clustering Algorithms}
Many existing clustering algorithms implicitly optimize different parameter regimes of the \lcc objective.
We show this by running several clustering algorithms on two graphs, (1) a 1000-node synthetic graph generated from the BTER model~\cite{sesh2012bter}, and (2) the largest component (4158 nodes) of the ca-GrQc collaboration network from the \emph{arXiv} e-print website.
We cluster each graph using Graclus~\cite{dhillon2007weighted} (forming two clusters), Infomap~\cite{bohlin2014community},
and Louvain~\cite{blondel2008louvain}. To
form dense clusterings, we also partition the networks by recursively extracting the maximum clique
(called \alg{RMC}), and by recursively extracting the maximum quasi-clique (\alg{RMQC}), i.e., the largest set of nodes with inner edge density bounded below by some~$\rho < 1$. The last two procedures must solve an NP-hard objective at each step,
but for reasonably sized graphs there is available clique and quasi-clique detection software~\cite{rossi2015maxclique,liu2008effective}. 

After each algorithm has produced a single clustering of the unsigned network, we evaluate how the clustering's \lcc objective score changes as we vary $\lambda$. This allows us to observe whether the clustering produced by an algorithm is effectively approximating the optimal \lcc objective for a certain choice of $\lambda$. In Figure~\ref{clusalgs} we report for each $\lambda$ the ratio between each clustering's objective score and the \lcc LP-relaxation lower bound. We compare against running \alg{Lambda-Louvain}, which produces a different clustering for each value of $\lambda$. We also display adjusted rand index (ARI) scores between the \alg{Lambda-Louvain} clustering and the output of other algorithms. We note that the ARI scores peak in the same regime where each algorithm best optimizes \lcc. Typically the ARI peaks are higher for larger $\lambda$. This can be explained by realizing that when $\lambda$ is small, fewer clusters are formed. It is natural to expect there to be many ways to partition the graph into a small number of clusters such that different clusterings share a very similar structure, even if the individual clusters themselves do not match. On the whole, the plots in Figure~\ref{clusalgs} illustrate that our framework and algorithm effectively interpolate between several well-established strategies in graph partitioning, and can serve as a good proxy for any clustering task for which any one of these algorithms is known to be effective. 

\begin{figure}[t]
	\centering
	\subfloat[BTER graph with 1000 nodes\label{btr}]
	{\includegraphics[width=.35\linewidth]{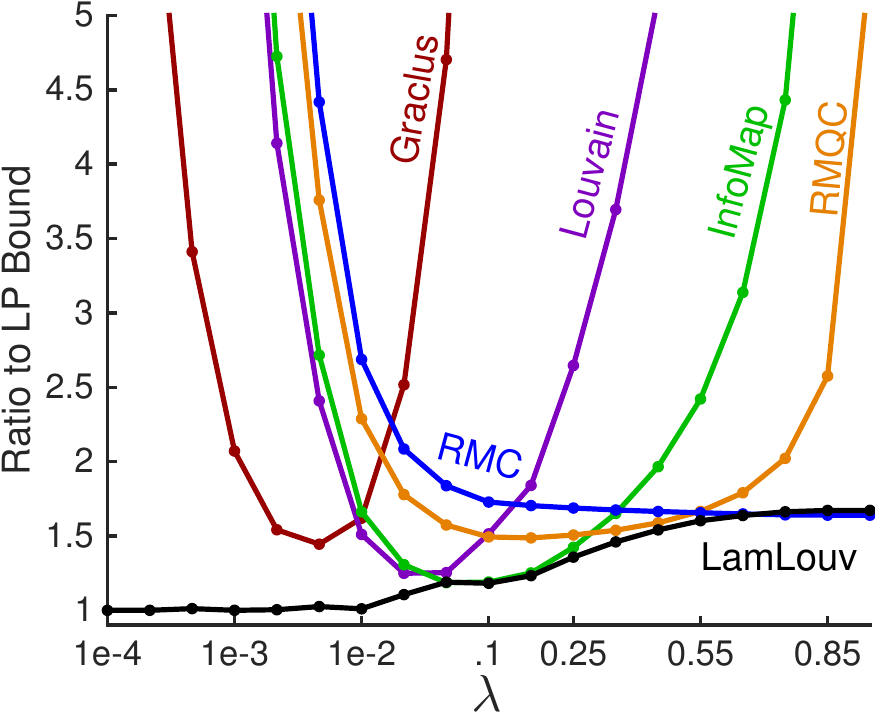}}\hspace{2cm}
	\subfloat[Largest component of ca-GrQc \label{cagrqc}]
	{\includegraphics[width=.35\linewidth]{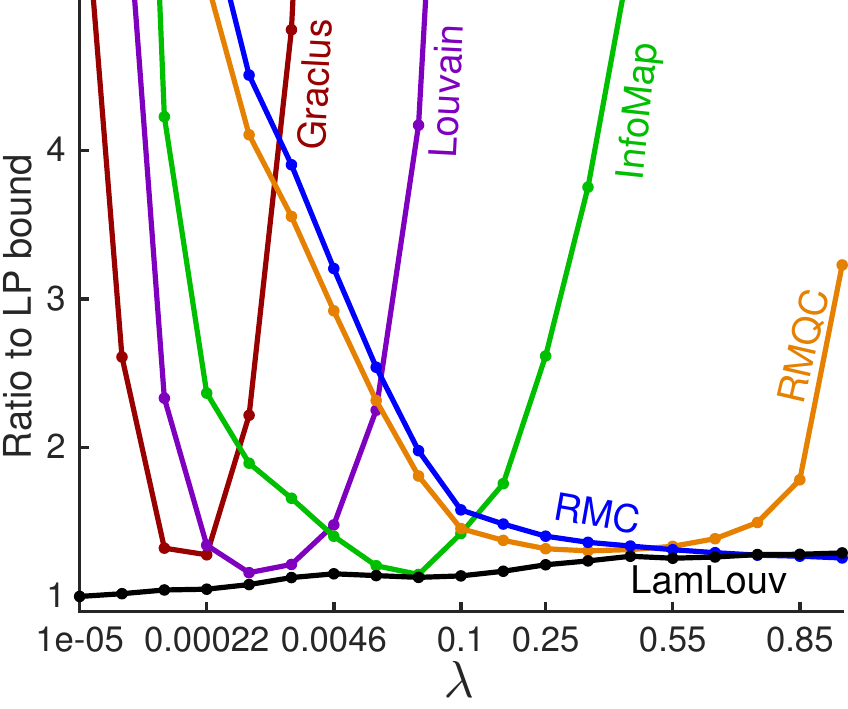}}\hspace{2cm}
	\centering
	\subfloat[ARI scores (BTER)\label{btrari}]
	{\includegraphics[width=.35\linewidth]{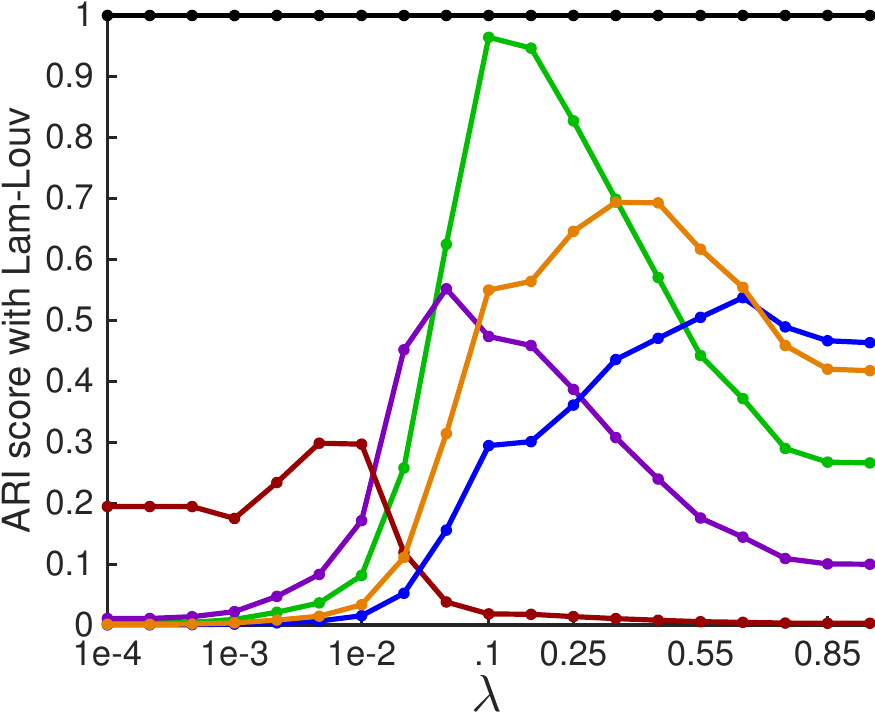}\label{bter}}\hspace{2cm}
	\subfloat[ARI scores (ca-GrQc) \label{cagrqcari}]
	{\includegraphics[width=.35\linewidth]{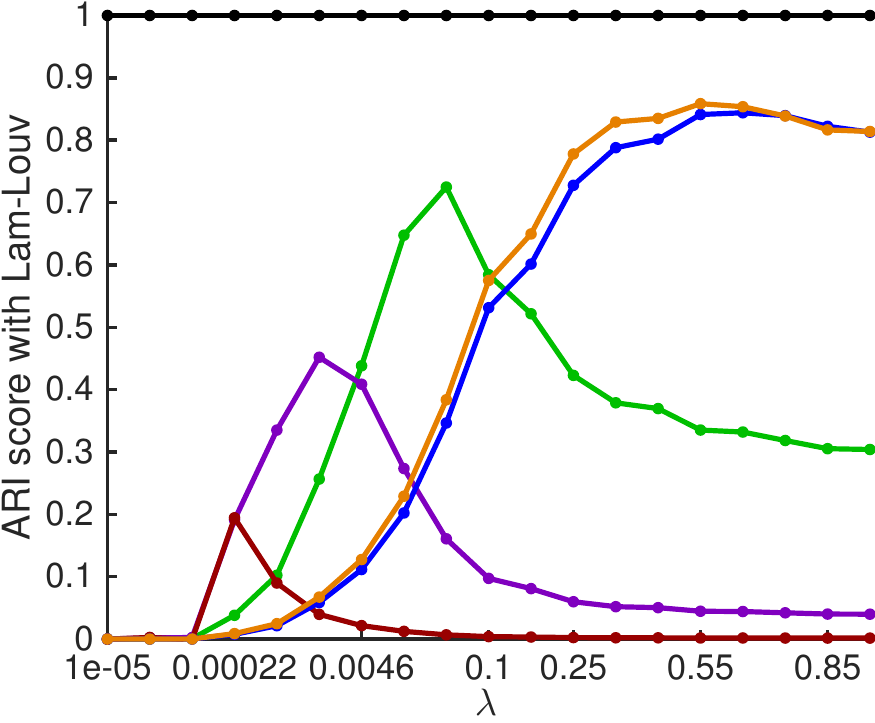}}
	\caption{On top we illustrate the performance of well-known clustering algorithms in approximating the \lcc objective
		on (a) one synthetic and (b) and one real-world graph. 
		The bowl-shaped curves indicate that each algorithm implicitly optimizes the \lcc objective in a different parameter regime. The $y$-axis reports the ratio between each clustering's objective score and the LP-relaxation lower bound. \alg{Lambda-Louvain} interpolates between all the clustering strategies seen here, always giving an approximation ratio~$\leq 2$. 
		In the lower plots we show the ARI score between each clustering and \alg{Lambda-Louvain} for both the BTER graph (c) and ca-GrQc (d). These show peaks in the same parameter regime where each algorithm is most successful at approximating \alg{LambdaCC}. }
	\label{clusalgs}
\end{figure}

By performing multiple runs of Graclus and varying the number of partitions formed by this algorithm, we can show that Graclus can approximately optimize different
parameter regimes of \alg{LambdaCC}. In Figure~\ref{manygraclus} we show how the Graclus objective scores change as we increase the number of clusters
from~$2$ to over~$2000$. As the number of clusters increases, the algorithm performs better and better for large~$\lambda$ and worse for smaller~$\lambda$.
Figure~\ref{manyrmqc} shows that something similar occurs for RMQC when we vary the minimum density of quasi-cliques from~$0.5$ to~$0.85$. As the inner-edge density increases, the performance of RMQC essentially converges to the performance of RMC. 

\paragraph{Solving the LP relaxation}
{Our results in this experiment involve solving the LP-relaxation of correlation clustering, which includes~$\Theta(n^3)$ triangle inequality constraints, for graphs of size $n = 1000$ and $n = 4158$. For these problems the LP constraint matrix is extremely large, and standard black-box LP solvers are impractical, since even forming the constraint matrix is prohibitively expensive. We employ two general strategies for overcoming the huge memory requirement in practice. The first is to solve the LP on a subset of the constraints, then iteratively update the constraint set and re-solve the LP as needed, until convergence. The second approach employs the triangle-fixing procedure of Dhillon et al.\ for the related metric nearness problem~\cite{dhillon2005triangle}. }

\begin{figure}
	\centering
	\subfloat[Multiple runs of Graclus on the caGrQc network, with increasingly many clusters\label{manygraclus}]
	{\includegraphics[width=.35\linewidth]{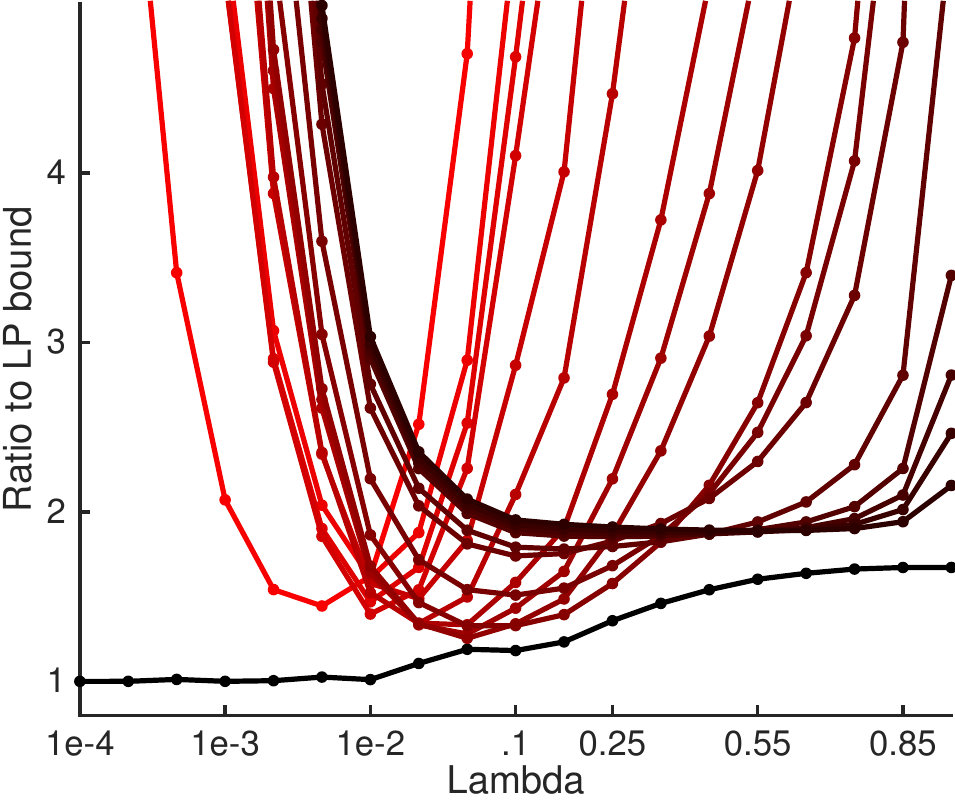}}\hspace{2cm}
	\subfloat[Multiple runs of RMQC on the caGrQc network, with increasingly higher quasi-clique density\label{manyrmqc}]
	{\includegraphics[width=.35\linewidth]{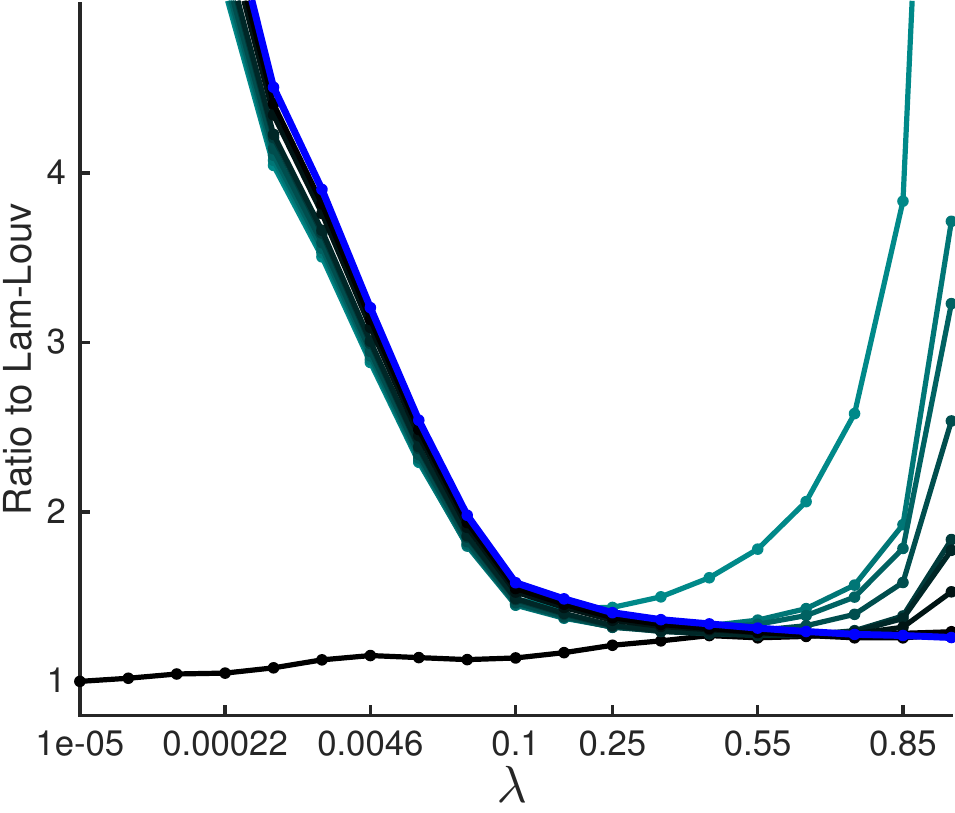}}
	\caption{As we increase the number of clusters formed by Graclus (left), the algorithm does better for large values of~$\lambda$ and worse for
		small values. The algorithm seems particularly well-suited to optimize the \lcc objective for very small values of~$\lambda$. Darker curves represent
		a larger number of clusters formed; the number of clusters formed ranges from~$2$ on the far left of the plot to just over~$2200$ for the right-most
		curve. In the right plot we vary the density~$\rho$ of quasi-cliques formed by RMQC from~$0.5$ to~$0.85$. In this plot, darker curves represent a larger density. As density increases, the curves converge to the performance of RMC, shown in blue.}
	\label{many}
\end{figure}

\subsection{Cliques in Large Collaboration Networks}
The connection between \lcc and cluster deletion provides a new approach for enumerating groups in large networks.
Here we evaluate \alg{GrowClique} for cluster deletion and use it to cluster two large collaboration networks,
one formed from a snapshot of the author-paper DBLP dataset in~2007, and the other generated using actor-movie information from the NotreDame actors
dataset~\cite{Barabasi}. The original data in both cases is a bipartite network indicating which \emph{players} (i.e., authors or actors) have parts in different \emph{projects} (papers or movies respectively). We transform each bipartite network into a graph in which nodes are players and edges represent collaboration on a project. 

At each iteration \alg{GrowClique} grows~$500$ (possibly overlapping) cliques from random seeds and selects the largest to be included in
the final output. We compare against \alg{RMC}, an expensive method which provably returns a $2$-approximation to the optimal cluster deletion objective~\cite{dessmark2007edgeclique}. We also design \alg{ProjectClique}, a method that looks at the original bipartite network and recursively identifies the project associated with the largest number of players not yet assigned to a cluster. These players form a clique in the collaboration network, so \alg{ProjectClique} clusters them together, then repeats the procedure on remaining nodes.


Table~\ref{tab:collab} shows that \alg{GrowClique} outperforms \alg{ProjectClique} in both cases, and slightly outperforms \alg{RMC} on the actor network. Our method is therefore competitive against two algorithms that in some sense have an unfair advantage over it: \alg{ProjectClique} employs knowledge not available to \alg{GrowClique} regarding the original bipartite dataset, and \alg{RMC} performs very well mainly because it solves an NP-hard problem at each step.
\begin{table}[t]
	\caption{Cluster deletion scores for \alg{GrowClique} (GC), \alg{ProjectClique} (PC) and \alg{RMC} on two collaboration networks. \alg{GrowClique} is agnostic to the underlying player-project bipartite network, and does not solve an NP-hard objective at each iteration, yet returns very good results. Best score for each
		dataset is \emph{emphasized}.}
	\label{tab:collab}
	\centering
	\begin{tabular}{lccccc}
		\toprule
		{Dataset} & Nodes&Edges & \alg{GC} & \alg{PC} & \alg{RMC} \\
		\midrule
			Actors & 341,185&10,643,420 & \emph{8,085,286} & 8,086,715 & 8,087,241\\
			DBLP &  526,303&1,616,814 & 945,489 & 946,295 & \emph{944,087}  \\
		\bottomrule
	\end{tabular}
\end{table}
\subsection{Clustering Yeast Genes} 
The study of cluster deletion and cluster editing (which is equivalent to $\pm 1$-correlation clustering) was originally motivated by applications to clustering genes using expression patterns~\cite{bendor1999clustering,shamir2004cluster}. Standard \alg{LambdaCC} is a natural framework for this, since it generalizes both objectives and interpolates between them as~$\lambda$ ranges from~$1/2$ to~$m/(m+1)$.
%
%
We cluster genes of the \emph{Saccharomyces cerevisiae} yeast organism using microarray expression data collected by Kemmeren
et al.~\cite{kemmeren2014large}. With the~$200$ expression values from the dataset, we compute correlation coefficients between all pairs of genes.
We threshold these at~$0.9$ to obtain a small graph of~$131$ nodes corresponding to unique genes, which we cluster with \alg{twoCD}.
For this cluster deletion experiment, our algorithm returns the optimal solution: solving the LP-relaxation returns a solution that is in fact integral. We validate each clique of size at least three returned by \alg{twoCD} against known gene-association data from the Saccharomyces Genome Database (SGD) and the String Consortium Database (see Table~\ref{tab:genes}). With one exception, these cliques match groups of genes that are known to be strongly associated, according to at least one validation database.
The exception is a cluster with four genes (YHR093W, YIL171W, YDR490C, and YOR225W), three of which, according to the SGD are not known to be associated with any Gene Ontology term. 
We conjecture that this may indicate a relationship between genes not previously known to be related. 
\begin{table}[h!]
	\caption{We list cliques of size~$\geq 3$ in the optimal clustering (found by \alg{twoCD}) of a network of~$131$ yeast genes. We
		validate each cluster using the SGD GO slim mapper tool, which identifies any GO term (function, process, or component of the organism) for a given
		gene. We list one GO term shared by all genes in the cluster, if one exists. The Term~$\%$ column reports the percentage of all genes in the organism
		associated with this term. A low percentage indicates a cluster of genes that share a process, component, or function that is not widely shared among other genes.
		The final column shows the minimum String association score between every pair of genes in the cluster, a number between~$0$ and~$1000$ (higher is
		better). Any non-zero score is a strong indication of gene association, as the majority of String scores between genes of \emph{S.~cerevisiae} are
		zero. All clusters, except the third, either have a high minimum String score or are all associated with a specific GO term.}
	\label{tab:genes}
	\centering
	\begin{tabular}{lllll}
		\toprule
		Clique \#& Size & Shared GO term & Term $\%$ & String \\
		\midrule
		1&6 & nucleus & 34.3 & 0\\
		2&4 & nucleus & 34.3 & 202\\
		3&4 & N/A	& - & 0\\
		4&4 & vitamin metabolic process & 0.7 & 980\\
		5&3 & cytoplasm & 67.0 &990 \\
		6&3 & cytoplasm & 67.0 & 998 \\
		7&3 & N/A & - & 962 \\
		8&3 & cytoplasm & 67.0 & 996 \\
		9&	3 & N/A & - & 973 \\
		10&	3 & transposition & 1.7 & 0\\
		\bottomrule
	\end{tabular}
\end{table}
\subsection{Social Network Analysis with \lcc}
Clustering a social network using a range of resolution parameters can reveal valuable insights about how links are formed in the network. Here we examine several graphs from the Facebook100 dataset, each of which represents the induced subgraph of the Facebook network corresponding to a US university at some point in~2005. The networks come with anonymized meta-data, reporting attributes such as major and graduation year for each node.
While meta-data attributes are not expected to correspond to ground-truth communities in the network~\cite{Peel2017ground}, we
do expect them to play a role in how friendship links and communities are formed. In this experiment
we illustrate strong correlations between the link structure of the networks and the dorm, graduation year, and student/faculty
status meta-data attributes. We also see how these correlations are revealed, to different degrees, depending on our choice of~$\lambda$. 

Given a Facebook subgraph with~$n$ nodes, we cluster it with degree-weighted
\alg{Lambda-Louvain} for a range of~$\lambda$ values between $0.005/n$ and $0.25/n$. In this clustering, we refer to two nodes in the same cluster as an \emph{interior pair}. We measure how well a meta-data attribute~$M$ correlates
with the clustering by calculating the proportion of interior pairs that share the same value for~$M$. This value, denoted by~$P(M)$, can also be interpreted as the probability of selecting an interior pair uniformly at random and finding that they agree on attribute~$M$. To determine whether the probability is meaningful, we compare it against a null probability $P(\tilde{M})$: the probability that a random interior pair agree at a \emph{fake} meta-data
attribute~$\tilde{M}$. We assign to each node a value for the fake attribute~$\tilde{M}$ by performing a random
permutation on the vector storing values for true attribute~$M$. In this way, we can compare each true attribute~$M$
against a fake attribute~$\tilde{M}$ that has the same exact proportion of nodes with each attribute value, but does not impart any true information regarding each node. 

We plot results for each of the three attributes $M \in \{\dorm, \, \yr, \, \fs \text{ (student/faculty)}\}$ on four
Facebook networks in Figure~\ref{fig:facebook}, as~$\lambda$ is varied. In all cases, we see significant differences between~$P(M)$
and~$P({\tilde{M}})$. In general,~$P(\yr)$ and $P(\fs)$ reach a peak at small values of~$\lambda$ when clusters are
large, whereas $P(\dorm)$ is highest when~$\lambda$ is large and clusters are small. This indicates that the first two attributes are more highly correlated with large sparse communities in the network, whereas sharing a dorm is more correlated with smaller, denser communities. Caltech, a small residential university, is an exception to these trends and exhibits a much stronger correlation with the dorm attribute, even for very small~$\lambda$.

\begin{figure}[t]
	\centering
	\subfloat[Swarthmore $n = 1659$ \label{swarthmore}]
	{\includegraphics[width=.35\linewidth]{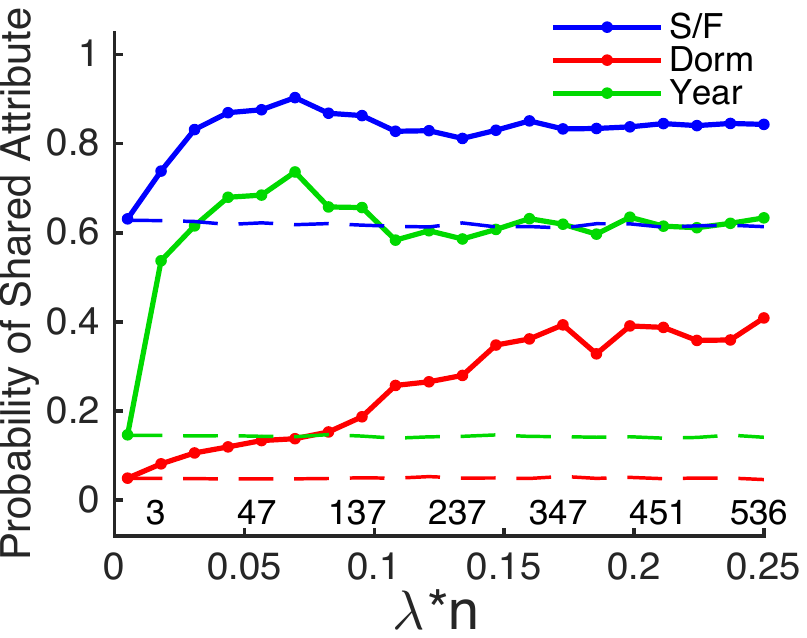}}\hspace{2cm}
	\subfloat[Yale $n = 8578$ \label{yale}]
	{\includegraphics[width=.35\linewidth]{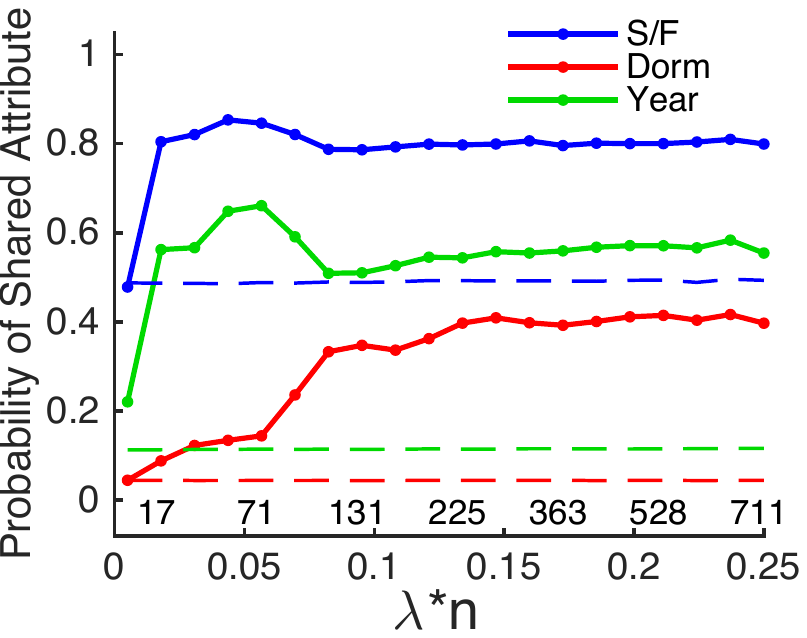}}\hspace{2cm}
	\centering
	\subfloat[Cornell\label{cornell} $n = 18660$]
	{\includegraphics[width=.35\linewidth]{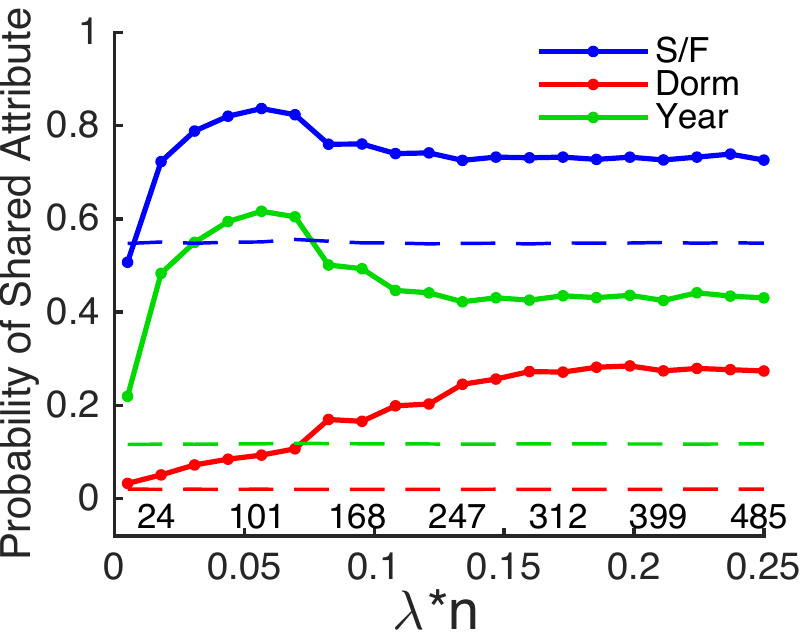}}\hspace{2cm}
	\subfloat[Caltech $n = 769$\label{caltech}]
	{\includegraphics[width=.35\linewidth]{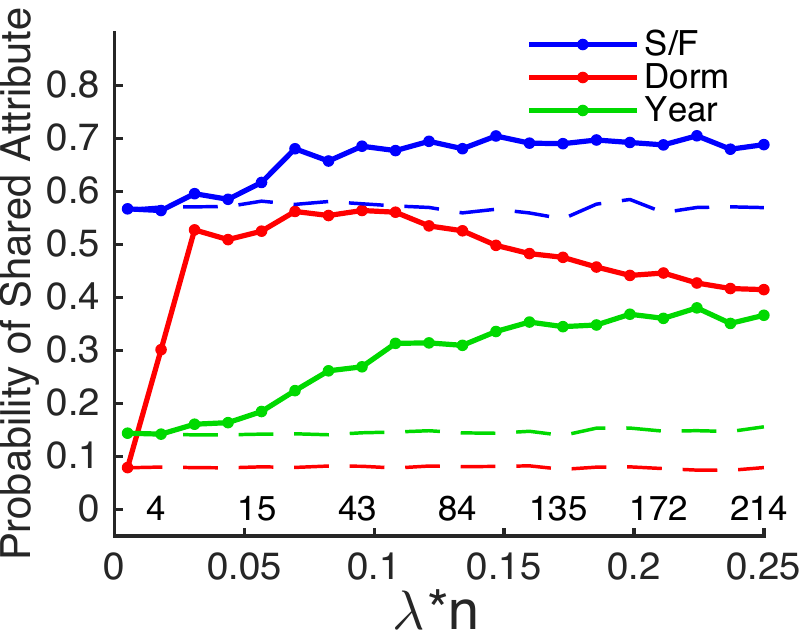}}
	\caption{On four university Facebook graphs, we illustrate that the dorm (red), graduation year (green), and student/faculty (S/F) status (blue) meta-data attributes all correlate highly with the clustering found by \alg{Lambda-Louvain} for each~$\lambda$. Above the $x$-axis we show the number of clusters formed, which strictly increases with~$\lambda$. The $y$-axis reports the probability that two nodes sharing a cluster also share an attribute value. Each attribute curve is compared against a null probability, shown as a dashed line of the same color. 
		The large gaps between each attribute curve and its null probability indicate that the link structure of all
		networks is highly correlated with these attributes. In general, probabilities for $\yr$ and $\fs$
		status are highest for small~$\lambda$, whereas $\dorm$ has a higher correlation with smaller, denser communities in the
		network. Caltech is an exception to the general trend; see the main text for discussion.}
	\label{fig:facebook}
\end{figure}

\subsection{Clustering an Email Network}
Our previous experiment highlighted our method's ability to detect strong correlations between meta-data attributes and community structure in real-world networks. We now explore how to appropriately choose a resolution parameter that best highlights the extent to which meta-data relates to good clusters in a graph. In particular, we cluster the email-EU graph, which encodes email correspondence between~1005 faculty members organized into~42 departments (the meta-data) at a European research institution~\cite{leskovec2007graph}. In order to learn as much as we can about the relationship
between our resolution parameter and the meta-data, we purposely select the value of~$\lambda$ that empirically leads
to the best Adjusted Rand Index (ARI) scores between the clustering determined by departments and the output of degree-weighted \alg{Lambda-Louvain}. We find that when $\lambda = 10^{-4}$, our method's ARI score is much higher than the scores obtained by algorithms that optimize a more rigid objective function (Table~\ref{tab:eu}). This highlights the potential benefit our framework can provide when given the right parameter, and shows the importance of developing good techniques for appropriately selecting~$\lambda$.

The insight in this experiment comes from comparing normalized cut scores in different clusterings. Running \alg{Lambda-Louvain} with $\lambda = 10^{-4}$ yields clusters with scaled normalized cut between $0.8825\times10^{-5}$ and $3.129\times 10^{-5}$, similar to the clusters defined by meta-data,
which exhibit scores between~$1.026\times 10^{-5}$ and $3.126 \times 10^{-5}$. Interestingly, these values are roughly an order-of-magnitude smaller than our choice of~$\lambda = 10^{-4}$. This is consistent with our result in Theorem~\ref{thm:sscbound}:~$\lambda$ is an upper bound on the scaled normalized cut
scores of \emph{all} clusters formed by \alg{Lambda-Louvain}. This suggests a general strategy for setting~$\lambda$ when we wish to learn the extent to which meta-data and community structure coincide. If we are given any a priori knowledge about the scaled normalized cut score for node sets sharing the same attribute, we know not to set~$\lambda$ equal to or lower than this score. Rather, we choose a resolution parameter that is not too far from this value, but is still a \emph{generous} upper bound. In future work, we aim to continue researching both theoretically and experimentally how to more precisely determine a priori the right upper-bound~$\lambda$ to use.

\begin{table}[t]
	\caption{To cluster the email-EU network, we run each method~20 times and report the median ARI score
		between each clustering and the clustering determined by meta-data. When $\lambda = 10^{-4}$, degree-weighted \alg{Lambda-Louvain} exhibits the highest ARI scores, indicating that our resolution parameter gives us the flexibility to better detect the extent to which meta-data coincides with community structure. 
		We have Metis~\cite{karypis1998fast} form~27 clusters and Graclus form~13, since each yields the best results for the algorithm.} 
	\label{tab:gtruth}
	\centering
	\begin{tabular}{ccccc}
		\toprule
		Lam-Louv & {Metis} & {Graclus} & {Louvain} & {InfoMap} \\
		\midrule
		0.587& 0.359 & 0.393 & 0.264 & 0.273\\
		\bottomrule
	\end{tabular}
	\label{tab:eu}
\end{table}


\section{Discussion}
We have introduced a new clustering framework that unifies several other commonly-used objectives and offers many attractive theoretical
properties. We prove that our objective function interpolates between the sparsest cut objective and the cluster deletion problem, as we vary a single input parameter,~$\lambda$. We give a $3$-approximation algorithm for our objective when $\lambda \geq 1/2$, and a related method which improves the best approximation factor for cluster deletion from 3 to 2. We also give scalable procedures for greedily improving our objective, which are successful in a wide variety of clustering applications. These methods are easily modified to add must-cluster and cannot-cluster constraints, which makes them amenable to many applications. In future work, we will continue exploring approximations when~$\lambda< 1/2$. 

\section*{Acknowledgements} This work was supported by several funding agencies: Nate Veldt and David Gleich are supported by NSF award IIS-1546488, David Gleich is additionally supported by NSF awards CCF-1149756 and CCF-093937 as well as the DARPA Simplex program and the Sloan Foundation. Anthony Wirth is supported by the Australian Research Council. We thank Flavio Chierichetti for several helpful conversations and also thank the anonymous reviewers for several helpful suggestions for improving our work, in particular for mentioning connections to the work of van Zuylen and Williamson~\cite{zuylen2009deterministic}, which led to significantly improved approximation results.

\newpage
\bibliographystyle{abbrv}
\bibliography{all-bibliography} 

\end{document}